\newenvironment{proof}{{\bf Proof:}}{\hspace*{\fill}\(\Box\)}
\title{Iterative Byzantine Vector Consensus in Incomplete Graphs~\thanks{This research is supported
in part by
 National
Science Foundation award CNS-1059540
and Army Research Office grant W-911-NF-0710287. Any opinions, findings, and conclusions or recommendations expressed here are those of the authors and do not
necessarily reflect the views of the funding agencies or the U.S. government.}
}
\author{{\normalsize\bf Nitin H. Vaidya}\\
 \normalsize Department of Electrical and Computer Engineering\\
 \normalsize University of Illinois at Urbana-Champaign\\
\normalsize nhv@illinois.edu\\}
\date{July 8, 2013}
\newcommand{\comment}[1]{}
\newcommand{\nchoosek}[2]{{#1 \choose #2}}
\newcommand{\algo}{{\tt Byz-Iter}}
\newcommand{\matrixm}{{\bf M}}
\newcommand{\matrixh}{{\bf H}}
\newcommand{\graphh}{H}
\newcommand{\terminate}{t_{end}}
\newcommand{\Xeightarrow}[1]{\stackrel{~#1}{\longrightarrow}}
\newcommand{\fArrow}{\Xeightarrow{f}}
\newcommand{\dfArrow}{\Xeightarrow{df}}
\newcommand{\cArrow}{\Xeightarrow{c}}
\newcommand{\scripte}{\mathcal{E}}
\newcommand{\scriptf}{\mathcal{F}}
\newcommand{\scripth}{\mathcal{H}}
\newcommand{\scriptv}{\mathcal{V}}
\newtheorem{theorem}{Theorem}
\newtheorem{claim}{Claim}
\newtheorem{definition}{Definition}
\newtheorem{lemma}{Lemma}
\def\noflash#1{\setbox0=\hbox{#1}\hbox to 1\wd0{\hfill}}
\newcommand{\bfA}{{\bf A}}
\newcommand{\bfB}{{\bf B}}
\newcommand{\bfH}{{\bf H}}
\newcommand{\bfM}{{\bf M}}
\newcommand{\bfQ}{{\bf Q}}
\newcommand{\bfx}{{\bf x}}
\newcommand{\bfv}{{\bf v}}
\newcommand{\bfr}{{\bf r}}
\newcommand{\bfz}{{\bf z}}
\newcommand{\HH}{\mathcal{H}}
\begin{document}

\maketitle


\begin{abstract}
%
%
%
%
%
This work addresses {\em Byzantine vector consensus} (BVC), wherein the input at each process is a $d$-dimensional vector of reals, and each process is expected to decide on a {\em decision vector} that is in the {\em convex hull} of the input vectors at the fault-free processes  \cite{mendes13stoc,vaidya13podc}.  The input {\em vector} at each process may also be viewed as a {\em point} in the $d$-dimensional Euclidean space ${\bf R}^d$, where $d>0$ is a finite integer. 
Recent work \cite{mendes13stoc,vaidya13podc} has addressed Byzantine vector consensus in systems that can be modeled
by a {\em complete} graph.
This paper considers Byzantine vector consensus in {\em incomplete} graphs. In particular, we address a particular class of
{\em iterative} algorithms in incomplete graphs, and prove a necessary condition, and a sufficient condition,
for the graphs to be able to solve the vector consensus problem iteratively.
We present an iterative Byzantine vector consensus algorithm, and prove it correct
under the sufficient condition.
The necessary
condition presented in this paper for vector consensus does not match with the sufficient condition for $d>1$; thus,
a weaker condition may potentially suffice for Byzantine vector consensus.
\end{abstract}



\setcounter{page}{1}

\newcommand{\bfe}{{\bf e}}
\newcommand{\zero}{{\bf zero}}

\newcommand{\sg}{{\mathcal G}}
\newcommand{\sv}{{\mathcal V}}
\newcommand{\sF}{{\mathcal F}}
\newcommand{\se}{{\mathcal E}}
\newcommand{\sw}{{\mathcal W}}
\newcommand{\ms}{{\mathcal S}}

\section{Introduction}

This work addresses {\em Byzantine vector consensus} (BVC), wherein the input at each process is a $d$-dimensional vector of reals, and each process is expected to decide on a {\em decision vector} that is in the {\em convex hull} of the input vectors at the fault-free processes  \cite{mendes13stoc,vaidya13podc}.  The input {\em vector} at each process may also be viewed as a {\em point} in the $d$-dimensional Euclidean space ${\bf R}^d$, where $d>0$ is a finite integer.  Due to this correspondence, we use the terms {\em point} and {\em vector} interchangeably.
Recent work \cite{mendes13stoc,vaidya13podc} has addressed Byzantine vector consensus in systems that can be modeled
by a {\em complete} graph.
The correctness conditions for Byzantine vector consensus (elaborated below) cannot be satisfied by
independently performing consensus on each element of the input vectors; therefore, new algorithms are necessary.
Here we consider Byzantine vector consensus in {\em incomplete} graphs. In particular, we address a particular class of
{\em iterative} algorithms in incomplete graphs, and prove a necessary condition, and a sufficient condition,
for the graphs to be able to solve the vector consensus problem iteratively.
The paper extends our past work on {\em scalar} consensus in incomplete graphs in presence of Byzantine faults \cite{vaidya12podc},
which yielded an
exact characterization of graphs in which the problem is solvable.
We present an iterative Byzantine vector consensus algorithm, and prove it correct
under the sufficient condition;
the proof follows a structure
previously used in our work to prove correctness of other consensus algorithms
\cite{vaidya_matrix_IABC,tseng13icdcn}.

The necessary
condition presented in this paper for vector consensus does not match with the sufficient condition for $d>1$; thus, it is possible that
a weaker condition may also suffice for Byzantine vector consensus. We hope that this paper will motivate further work on identifying
the tight sufficient condition.

In other related work \cite{tseng13optconvex}, we present another generalization of the consensus problem considered in \cite{mendes13stoc,vaidya13podc}. 
In particular, \cite{tseng13optconvex} considers the problem of deciding on a {\em convex hull} (instead of just one point)
that is contained in the convex hull of the inputs at the fault-free nodes. 

The paper is organized as follows. Section \ref{s_model} presents our system model.
The iterative algorithm structure considered in our work is presented in Section \ref{s_structure}.
Section \ref{s_nec} presents a necessary condition, and Section \ref{s_suff} presents a sufficient condition.
Section \ref{s_suff} also
presents an iterative algorithm and proves
its correctness under the sufficient condition.
The paper concludes with a summary in Section \ref{s_summary}.

\section{System Model}
\label{s_model}
 
The system is assumed to be {\em synchronous}.\footnote{Analogous results can be similarly derived for asynchronous systems, using the
asynchronous algorithm structure presented
in \cite{vaidya12podc} for the case of $d=1$.}
The communication network is modeled as a simple {\em directed} graph $G(\scriptv,\scripte)$, where $\scriptv=\{1,\dots,n\}$ is the set of $n$ processes, and $\scripte$ is the set of directed edges between the processes in $\scriptv$. Thus, $|\sv|=n$. We assume that $n\geq 2$, since the consensus problem for $n=1$ is trivial.
 Process $i$ can reliably transmit messages to process $j$, $j\neq i$, if and only if
the directed edge $(i,j)$ is in $\scripte$.
Each process can send messages to itself as well, however,
for convenience of presentation, we \underline{exclude} self-loops from set $\scripte$.
That is, $(i,i)\not\in\scripte$ for $i\in\scriptv$.
We will use the terms {\em edge}
and {\em link} interchangeably.

For each process $i$, let $N_i^-$ be the set of processes from which $i$ has incoming
edges.
That is, $N_i^- = \{\, j ~|~ (j,i)\in \scripte\, \}$.
Similarly, define $N_i^+$ as the set of processes to which process $i$
has outgoing edges. That is, $N_i^+ = \{\, j ~|~ (i,j)\in \scripte\, \}$.
Since we exclude self-loops from $\scripte$,
$i\not\in N_i^-$ and $i\not\in N_i^+$. 
However, we note again that each process can indeed send messages to itself.

We consider the Byzantine failure model, with up to $f$ processes becoming faulty. A faulty process may {\em misbehave} arbitrarily. The faulty processes may potentially collaborate with each other. Moreover, the faulty processes are assumed to have a complete knowledge of the execution of
the algorithm, including the states of all the processes,
contents of messages the other processes send to each other,
the algorithm specification, and the network topology.

\paragraph{Notation:} We use the notation $|X|$ to denote the size of a
set or a multiset, and the notation $\|x\|$ to denote the absolute value
of a real number $x$.

\section{Byzantine Vector Consensus and Iterative Algorithms}
\label{s_structure}

\paragraph{Byzantine vector consensus:}

We are interested in {\bf iterative} algorithms that satisfy the
following conditions in presence of up to $f$ Byzantine faulty processes:
\begin{itemize}

\item {\em Termination}: Each fault-free process must terminate after a finite number of iterations.

\item {\em Validity}: The state of each fault-free process at the end of
\underline{\bf each iteration} must be in the convex hull of the $d$-dimensional input vectors at the fault-free processes.

\item {\em $\epsilon$-Agreement}: When the algorithm terminates,
the $l$-th elements of the decision vectors at any two fault-free processes,
where $1\leq l\leq d$,
must be within $\epsilon$ of each other, where $\epsilon>0$ is a
pre-defined constant.

\end{itemize}
Any information carried over by a process from iteration $t$ to iteration
$t+1$ is considered the state of process $t$ at the end of iteration $t$. 
The above {\em validity} condition forces the algorithms to maintain
``minimal'' 
state, for instance, precluding the possibility of remembering messages received in several of the past iterations, or remembering the history of detected misbehavior of the neighbors.
Therefore,
we focus on algorithms with a simple iterative structure, described below. 
\paragraph{Iterative structure:}
Each process $i$ maintains a state variable $\bfv_i$, which is a $d$-dimensional vector.
The initial state of process $i$ is denoted as
$\bfv_i[0]$, and it equals the {\em input}\, provided to process $i$.
For $t\geq 1$, $\bfv_i[t]$ denotes the state
of process $i$ at the {\em end}\, of the $t$-th iteration of the algorithm.
At the {\em start} of the $t$-th iteration ($t\geq 1$), the state of
process $i$ is $\bfv_i[t-1]$.
The iterative algorithms of interest will require each process $i$
to perform the following three steps in the $t$-th iteration.
Each ``value'' referred in the algorithm below is a
$d$-dimensional vector (or, equivalently, a point
in the $d$-dimensional Euclidean space).
\begin{enumerate}
\item {\em Transmit step:} Transmit current state, namely $\bfv_i[t-1]$, on all outgoing edges
 to processes in $N_i^+$.

\item {\em Receive step:} Receive values on all incoming edges from processes in $N_i^-$. 
Denote by $r_i[t]$ the multiset\footnote{The same value may occur
multiple times in a multiset.} of values received by process $i$ from its
neighbors. The size of multiset $r_i[t]$ is $|N_i^-|$.

\item {\em Update step:} Process $i$ updates its state using a transition function $T_i$ as
follows. $T_i$ is a part of the specification of the algorithm, and takes
as input the multiset $r_i[t]$ and state $\bfv_i[t-1]$.
\begin{eqnarray}
\bfv_i[t] & = &  T_i ~( ~r_i[t]\,,\,\bfv_i[t-1] ~)
\label{eq:Z_i}
\end{eqnarray}


\end{enumerate}
The decision (or output) of each process equals its state when the algorithm terminates. \\

We assume that each element of the input vector at each fault-free
process is lower bounded by a constant $\mu$ and upper bounded by a constant $U$.
The iterative algorithm may terminate after a number of rounds that is a function of $\mu$ and $U$.
$\mu$ and $U$ are assumed to be known {\em a priori}.
This assumption holds in many practical systems, because the
input vector elements represent quantities that are constrained. For instance,
if the input vectors are probability vectors, then $U=1$ and $\mu=0$.
If the input vectors represent locations in 3-dimensional space occupied
by mobile robots, then $U$ and $\mu$ are determined by the boundary of
the region in which the robots are allowed to operate.

In Section \ref{s_nec}, we develop a necessary condition that the graph $G(\sv,\se)$
must satisfy in order for the Byzantine vector consensus algorithm to be solvable
using the above iterative structure.
In Section \ref{s_suff}, we develope a sufficient condition, such that 
the Byzantine vector consensus algorithm is solvable
using the above iterative structure in any graph that satisfies this condition.
We present an iterative algorithm, and prove its correctness under the
sufficient condition.

\section{A Necessary Condition}
\label{s_nec}

Hereafter, when we refer to an iterative algorithm, we mean an algorithm with
the iterative structure specified in the previous section.
In this section, we state a necessary condition on graph $G(\sv,\se)$ to be able to
achieve Byzantine vector consensus using an iterative algorithm.
First we introduce some notations.

\begin{definition}
~
\begin{itemize}
\item Define $\bfe_0$ to be a $d$-dimensional vector with all its elements equal to 0.
	Thus, $\bfe_0$ corresponds to the origin in the $d$-dimensional Euclidean space.
\item Define $\bfe_i$, $1\leq i\leq d$, to be a $d$-dimensional vector with the $i$-th element equal to $2\epsilon$,
and the remaining elements equal to 0.
Recall that $\epsilon$ is the parameter of the $\epsilon$-agreement condition.
\end{itemize}
\end{definition}

\begin{definition}
\label{def:absorb}
For non-empty disjoint sets of processes $A$ and $B$, and a non-negative integer $c$,
\begin{itemize}
\item $A \cArrow B$ if and only if there exists a process $v\in B$ that has at least $c+1$ incoming
edges from processes in $A$, i.e., $|N_v^-\cap A|\geq c+1$.
\item $A\not\cArrow B$ iff $A\cArrow B$ is {\em not} true.
\end{itemize}
\end{definition}

~

\begin{definition}
\label{e_convexhull}
$\HH(X)$ denotes the convex hull of a multiset of points $X$.
\end{definition}

\noindent
Now we state the necessary condition.

\paragraph{Condition NC:}~ 
{\em
For any partition
$V_0,V_1,\cdots,V_p$, $C$, $F$ of set $\sv$, where
$1\leq p\leq d$, $V_k\neq\emptyset$ for $0\leq k\leq p$, and $|F|\leq f$,
there exist $i,j$  ($0\leq i,j\leq p$, $i\neq j$), such that
\[ V_i\cup C \fArrow V_j \]
That is, there are $f+1$ incoming links from
processes in $V_i\cup C$ to some process in $V_j$. \\
}

\begin{lemma}
\label{l_nc1}
If the Byzantine vector consensus problem can be solved using an iterative algorithm in $G(\sv,\se)$,
then $G(\sv,\se)$ satisfies Condition NC.
\end{lemma}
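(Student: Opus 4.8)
The plan is to argue by contraposition: assume Condition NC fails, so there is a ``bad'' partition $V_0,V_1,\dots,V_p$, $C$, $F$ of $\sv$ with $1\le p\le d$, each $V_k\neq\emptyset$, $|F|\le f$, and $V_i\cup C \not\fArrow V_j$ for every ordered pair $i\neq j$ in $\{0,1,\dots,p\}$. From this partition I will construct two executions of the (allegedly correct) iterative algorithm that are indistinguishable to some fault-free process but that force a violation of either validity or $\epsilon$-agreement, contradicting correctness. The faulty set is taken to be $F$, and the remaining nonfaulty processes are partitioned into the groups $V_0,\dots,V_p$ and $C$.

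The key construction is the assignment of inputs. I assign to every process in $V_k$ the input $\bfe_k$ (the vector defined above: $\bfe_0$ is the origin, and $\bfe_k$ for $1\le k\le p$ has $2\epsilon$ in coordinate $k$ and zeros elsewhere), and I assign to the processes in $C$ some arbitrary admissible input, say also $\bfe_0$. These $p+1\le d+1$ points are affinely independent, so their convex hull is a nondegenerate simplex; in particular, for $k\neq 0$ the point $\bfe_k$ is not in the convex hull of $\{\bfe_0\}\cup\{\bfe_m : m\neq k, 0\le m\le p\}$, and it differs by $2\epsilon$ in coordinate $k$ from every point of that smaller hull. The idea is then that, because $V_i\cup C\not\fArrow V_j$, no process in $V_j$ ever receives more than $f$ messages originating (directly, in a given round) from $V_i\cup C$; so the Byzantine processes in $F$ can, in the eyes of the processes in $V_j$, ``simulate'' an execution in which the processes of $V_i$ effectively do not exist, i.e. $F$ sends to each node of $V_j$ exactly the messages that would complete the multiset had those $\le f$ genuine $V_i\cup C$-messages been absent or replaced. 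I will make this precise by an induction on the iteration number $t$: I maintain $p+1$ parallel executions, execution $\mathcal{E}_k$ being the one in which $V_k$'s inputs are ``isolated'' from the rest, and show that from the viewpoint of the processes in $V_k$ the real execution is indistinguishable from $\mathcal{E}_k$ in every round. Validity applied within $\mathcal{E}_k$ (whose fault-free inputs are all outside $V_k$ except possibly a small set, handled by the $\not\fArrow$ bound of $\le f$) then pins $\bfv_v[t]$ for $v\in V_k$ to lie, in coordinate $k$, strictly below $2\epsilon$ minus nothing — more carefully, it forces the $k$-th coordinates of the states in $V_i$ and in $V_j$ to stay on ``opposite sides'' of a gap of width $2\epsilon$ in some coordinate, so $\epsilon$-agreement can never be reached, while insisting on agreement would violate validity in one of the executions. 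A contradiction with Termination + Validity + $\epsilon$-Agreement follows.

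The main obstacle, and the step that needs the most care, is the indistinguishability induction and the precise bookkeeping of which inputs are ``fault-free'' in each auxiliary execution $\mathcal{E}_k$. The subtlety is that $C$ must be placed consistently on the ``correct'' side in each of the $p+1$ executions simultaneously, and that the $\le f$ real cross-edges from $V_i\cup C$ into $V_j$ permitted by $\not\fArrow$ must be absorbable by corrupting only the $F$-messages (never more than $f$ of them) — this is exactly where $|F|\le f$ and the definition of $\fArrow$ (requiring $f+1$ incoming links to trigger) are used, and it is why the threshold in Condition NC is $f+1$ rather than $1$. I would handle this by treating, for the purpose of execution $\mathcal{E}_k$, the at-most-$f$ processes in $V_i\cup C$ whose edges into a given $V_j$-node cross the cut as ``honorary faulty'' within that one auxiliary execution, so that $\mathcal{E}_k$ is a legitimate execution with at most $f$ faults and with all its genuinely-fault-free inputs confined to $\bigcup_{m\neq k} V_m \cup C$; validity in $\mathcal{E}_k$ then keeps coordinate $k$ of every $V_k$-state at $0$ (if $k\neq 0$, bounded away from $2\epsilon$) while in the execution $\mathcal{E}_i$ with $i$ such that... — I line up $i=k$ versus another index to get a coordinate where two fault-free groups are forced $2\epsilon$ apart forever, contradicting $\epsilon$-agreement at termination.
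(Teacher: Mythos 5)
Your proposal follows essentially the same route as the paper's proof: fix the bad partition, give every process in $V_k$ the input $\bfe_k$, let $F$ be the faulty set, and use the fact that each process in $V_j$ has at most $f$ in-neighbors in $V_i\cup C$ to make the real execution indistinguishable (to that process) from executions in which those few neighbors are faulty; an induction then freezes the states, and $\epsilon$-agreement fails because two groups stay $2\epsilon$ apart in some coordinate. So the approach is the right one, but two spots in your sketch, as written, would not close.

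First, your description of $\mathcal{E}_k$ is internally inconsistent: you stipulate that all genuinely fault-free inputs of $\mathcal{E}_k$ are confined to $\bigcup_{m\neq k}V_m\cup C$ (so the processes of $V_k$ are not counted among the fault-free ones there), and then invoke validity to constrain the states of the $V_k$-processes --- but validity only constrains fault-free processes. The paper's version avoids this by corrupting, for a fixed fault-free $j\in V_i$ and each $k\neq i$, only the set $A_k=N_j^-\cap(V_k\cup C)$ (legitimate since $|A_k|\le f$ by $V_k\cup C\not\fArrow V_i$), keeping $j$ itself fault-free; this yields $\bfv_j[t]\in\HH(\{\bfe_m: m\neq k\})$ for every $k\neq i$, and the intersection of these facets of the simplex over all $k\neq i$ is exactly $\{\bfe_i\}$. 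That intersection observation is the clean closing step that pins every state exactly and makes the induction go through; your ``opposite sides of a gap'' argument only gestures at it. Second, fixing the input of $C$ to $\bfe_0$ in the real execution voids the $k=0$ constraint for a process $j\in V_i$ with $i\neq 0$: in the auxiliary execution for $k=0$ the processes of $C\setminus A_0$ are fault-free, and if their inputs are $\bfe_0$ then the convex hull of fault-free inputs is the whole simplex, so validity forces nothing. This is repairable --- $j$ receives nothing from $C\setminus A_0$, so each auxiliary execution may reassign those inputs inside $S_k$ --- but it must be said explicitly; it is precisely why the paper leaves the inputs of $C$ unspecified and folds all of $j$'s $C$-neighbors into the corruptible set $A_k$ for every $k$.
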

\begin{proof}
The proof is by contradiction.
Suppose that Condition NC is not true. Then there exists
a certain partition $V_0,V_1,\cdots,V_p,C,F$ such that
$V_k\neq\emptyset$ ($1\leq k\leq p$), $|F|\leq f$, and
for $0\leq i,k\leq p$, $V_k\cup C\not\fArrow V_i$.

Let the initial state of each process
in $V_i$ be $\bfe_i$ ($0\leq i\leq p$).
Suppose that all the processes in set $F$ are faulty.
For each link $(j,k)$ such that $j\in F$ and $k\in V_i$ ($0\leq i\leq p$),
the faulty process $j$ sends value $\bfe_i$ to process $j$ 
in each iteration.

We now prove by induction that if the iterative algorithm satisfies the validity
condition then the state of each fault-free process 
$j\in V_i$ at the start of iteration $t$ equals $\bfe_i$, for all $t>0$.
The claim is true for $t=1$ by assumption on the inputs at the fault-free processes.
Now suppose that the claim is true through iteration $t$, and prove it for iteration $t+1$.
Thus, the state of each fault-free process in $V_i$ at the start of iteration $t$ equals
$\bfe_i$, $0\leq i\leq p$.

Consider any fault-free process $j\in V_i$, where $0\leq i\leq p$.
In iteration $t$,
process $j$ will receive $\bfv_g[t-1]$ from each
fault-free incoming neighbor $g$,
and receive $\bfe_i$ from each faulty incoming neighbor. These received values
form the multiset $r_j[t]$.
Since the condition in the lemma is assumed to be false, for 
any $k\neq i$, $0\leq k\leq p$, we have
\[
V_k\cup C\not\fArrow V_i.
\]
Thus, at most $f$ incoming neighbors of $j$ belong to
$V_k\cup C$, and therefore,
at most $f$ values in $r_j[t]$ equal $\bfe_k$.

Since process $j$ does not know which of its incoming neighbors, if any, are faulty, it must
allow for the possibility that any of its $f$ incoming neighbors are faulty.
Let $A_k\subseteq V_k\cup C$, $k\neq i$, be the set containing all the incoming neighbors of process $j$
in $V_k\cup C$. Since $V_k\cup C\not\fArrow V_i$,
$|A_k|\leq f$; therefore, all
the processes in $A_k$ are {\em potentially} faulty.
Also,
by assumption, the values received from all fault-free processes equal
their input, and the values received from faulty processes in $F$ equal $\bfe_i$.
Thus, due to the validity condition, process $j$ must choose as its new state
a value that is in the convex hull of the set 
\[ S_k=\{\bfe_m~|~m\neq k, 0\leq m\leq p\}.\]
where $k\neq i$.
Since this observation is true for each $k\neq i$, it follows that
the new state $\bfv_j[t]$ must be a point in the convex hull of
\[ \cap_{1\leq k\leq p,~k\neq i}~ \HH(S_k).
\]
It is easy to verify that the above intersection only contains the point $\bfe_i$.
Therefore, $\bfv_j[t]=\bfe_i$. Thus, the state of process $j$ at the start of iteration $t+1$
equals $\bfe_i$. This concludes the induction.

The above result implies that the state of each fault-free process remains unchanged through the iterations. Thus, the state of any two fault-free processes
differs in at least one vector element by $2\epsilon$, precluding $\epsilon$-agreement.
\end{proof}

~\\
The above lemma demonstrates the necessity of Condition NC.
Necessary condition NC implies a lower bound on the number of processes $n=|\sv|$ in $G(\sv,\se)$,
as stated in the next lemma.

\comment{++++++++++++++
\paragraph{Lemma \ref{l_nc2}~~}
{\em
(Necessary condition NC2)
Suppose that the Byzantine vector consensus problem can be solved using an iterative algorithm in $G(\sv,\se)$.
 Then, for
$f>0$, each process must have in-degree $\geq (d+1)f+1$.
}

\begin{proof}
Suppose that Byzantine vector consensus problem can be solved using an iterative algorithm 
in $G(\sv,\se)$. Therefore, by Lemma \ref{l_nc1}, $G$ must satisfy NC.
Suppose that there exists a process $i$ that
has in-degree at most $(d+1)f$.
Define $V_0=\{i\}$.
Partition $N_i^-$ (i.e., incoming neighbors of process $i$) into $(d+1)$ sets,
each containing at most $f$ processes.
Name one of these $d+1$ sets as set $F$, and name the remaining
$d$ sets in the partition as $V_1,V_2,\cdots,V_d$.
Define $C=\sv-\cup_{k=0}^d V_k-F$. There are no links from $C$ to process $i$.
Since none of the $V_i$'s contain more than $f$ processes, and $C$ is empty,
this partition of $\sv$ does not satisfy Condition NC.
The proof is by contradiction.
\end{proof}
+++++++++++++++++++++++}

\begin{lemma}
\label{l_nc2}
Suppose that the Byzantine vector consensus problem can be solved using an iterative algorithm 
in $G(\sv,\se)$.
Then, $n\geq (d+2)f+1$.
\end{lemma}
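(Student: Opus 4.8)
The plan is to derive the bound from Condition~NC, which is necessary by Lemma~\ref{l_nc1}. Since the claimed inequality $n \ge (d+2)f+1$ is vacuous when $f=0$ (we are given $n\ge 2$), I would assume $f\ge 1$ and argue by contradiction: suppose Byzantine vector consensus is solvable by an iterative algorithm in $G(\sv,\se)$ while $n \le (d+2)f$. By Lemma~\ref{l_nc1}, $G$ satisfies Condition~NC, so it suffices to produce a single partition of $\sv$ of the shape demanded by NC that nonetheless violates its conclusion.

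The guiding observation is that the relation $V_i\cup C \fArrow V_j$ cannot hold whenever $|V_i\cup C|\le f$, since by Definition~\ref{def:absorb} it would require some node of $V_j$ to have at least $f+1$ incoming edges from $V_i\cup C$. So I would take $C=\emptyset$ and make every $V_k$ have size at most $f$, pushing any surplus nodes into $F$. Concretely, put $p=\min\{d,\,n-1\}$, so that $1\le p\le d$ (here we use $n\ge 2$), and partition $\sv$ into $p+1$ nonempty sets $V_0,\dots,V_p$ together with one additional, possibly empty set $F$ such that $|V_k|\le f$ for $0\le k\le p$ and $|F|\le f$; set $C=\emptyset$.

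The only real bookkeeping is to check that such a partition exists. We need at least $p+1$ nodes to keep the $V_k$'s nonempty, which holds since $p+1\le n$ by the choice of $p$; and we need the total capacity $(p+1)f + f = (p+2)f$ to be at least $n$. When $p=d$ this is exactly the standing assumption $n\le (d+2)f$; when $p=n-1<d$ it reads $n\le (n+1)f$, which holds because $f\ge 1$. From $p+1\le n\le (p+2)f$ a concrete assignment is immediate, e.g.\ seed each of $V_0,\dots,V_p$ with one node and then distribute the remaining $n-(p+1)$ nodes among $V_0,\dots,V_p,F$ without exceeding the per-set capacities. For the resulting partition, $|V_i\cup C|=|V_i|\le f$ for every $i$, so $V_i\cup C\not\fArrow V_j$ for all $i\ne j$; this contradicts Condition~NC, and hence $n\ge (d+2)f+1$.

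I expect the main (and rather mild) obstacle to be the small-$n$ regime, where one cannot afford $d+1$ distinct classes $V_k$ and must instead take $p<d$; the case split and the accompanying capacity inequality $n\le (p+2)f$ are the only spots that need care. All of the substantive, geometric content of the necessity argument is already carried by Lemma~\ref{l_nc1}, so this statement is essentially a counting consequence of it.
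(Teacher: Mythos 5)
Your proof is correct and follows essentially the same route as the paper: invoke Lemma~\ref{l_nc1}, set $C=\emptyset$, and exhibit a partition into at most $d+1$ classes $V_k$ of size at most $f$ plus a set $F$ of size at most $f$, which makes every relation $V_i\cup C\fArrow V_j$ fail. Your explicit choice $p=\min\{d,n-1\}$ and the accompanying capacity check actually spell out the existence of the partition more carefully than the paper's one-line assertion, but the argument is the same.
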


\begin{proof}
Since the Byzantine vector consensus problem can be solved using an iterative algorithm 
in $G(\sv,\se)$, by Lemma \ref{l_nc1}, graph $G$ must satisfy Condition NC.
Suppose that $2\leq |\sv|=n\leq (d+2)f$. Then
there exists $p$, $1\leq p\leq d$, such that we can partition $\sv$ into 
sets $V_0,...,V_p,F$ 
such that for each $V_i$, $0<|V_i|\leq f$, and $|F|\leq f$.
Define $C=\emptyset$.
Since $|C\cup V_i|\leq f$ for each $i$, it is clear that this partition of $\sv$
cannot satisfy Condition NC. This is a contradiction.
\end{proof}

~


When $d=1$, the input at each process is a scalar. For the $d=1$ case, our prior work \cite{vaidya12podc}
yielded a tight necessary and sufficient condition for Byzantine consensus
to be achievable in $G(\sv,\se)$ using iterative algorithms.
For $d=1$,
the necessary condition stated in Lemma \ref{l_nc1} is equivalent to the necessary condition
in \cite{vaidya12podc}. We previously showed
 that, for $d=1$, the same condition is also sufficient
 \cite{vaidya12podc}.
However, in general, for $d>1$, Condition NC is not proved sufficient. Instead, we prove
the sufficiency of another condition stated in the next section.

\section{A Sufficient Condition}
\label{s_suff}



%

We now present Condition SC that is later proved to be sufficient for achieving
Byzantine vector consensus in graph $G(\sv,\se)$ using an iterative algorithm.

\paragraph{Condition SC:} 
{\em
For any partition $F,L,C,R$ of set $\scriptv$, such that $L$ and $R$ are both
non-empty, and $|F|\leq f$,
at least one of these conditions is true: ~$R\cup C\dfArrow L$, or ~$L\cup C\dfArrow R$.
} \\

Later in the paper we will present a Byzantine vector consensus algorithm named \algo~that is proved correct
in all graphs that saitsfy Condition SC. The proof will make use of Lemmas \ref{l_degree_sc}
and  
\ref{l_reduced} presented below.

\begin{lemma}
\label{l_degree_sc}
For $f>0$, if graph $G(\sv,\se)$ satisfies Condition SC, then
in-degree of each process in $\sv$ must be at least $(d+1)f+1$.
That is, for each $i\in\sv$, $|N_i^-|\geq (d+1)f+1$.
\end{lemma}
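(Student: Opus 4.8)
The plan is to prove the contrapositive: if some process $i$ has in-degree $|N_i^-| \leq (d+1)f$, then $G(\sv,\se)$ violates Condition SC. The idea mirrors the commented-out Lemma in the excerpt and the proof of Lemma~\ref{l_nc2}: I will build an explicit partition $F,L,C,R$ of $\sv$ for which neither $R\cup C \dfArrow L$ nor $L\cup C \dfArrow R$ holds.

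First I would set $L = \{i\}$, so that certifying a $\dfArrow$ arrow \emph{into} $L$ requires finding $(d+1)f+1$ incoming edges to the single process $i$ from a given source set — but $i$ has at most $(d+1)f$ incoming edges total, so no source set can achieve this, and in particular $R\cup C \not\dfArrow L$ for any choice of $R,C$. This handles one of the two disjuncts automatically. The remaining work is to choose $F$, $C$, and $R$ so that $L\cup C \not\dfArrow R$ as well, i.e., so that no process in $R$ receives more than $df$ edges from $\{i\}\cup C$. Since $\{i\}$ contributes at most one such edge, it suffices to ensure no process in $R$ has more than $df-1$ incoming edges from $C$; a clean way is simply to force $C = \emptyset$, in which case $L\cup C = \{i\}$ contributes at most one edge to any process, which is $\leq df$ since $d,f\geq 1$.

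So concretely: partition $N_i^-$ into $d+1$ groups each of size at most $f$; call one of them $F$ and the others $V_1,\dots,V_d$; then I need everything outside $\{i\}\cup N_i^-$ to go somewhere. Setting $C=\emptyset$ forces all of $\sv \setminus (\{i\}\cup N_i^-)$ into $R$, together with $V_1,\dots,V_d$. Then $L=\{i\}$, $R = \sv\setminus(\{i\}\cup F)$, $F$, $C=\emptyset$ is a valid partition (using $n \geq (d+2)f+1 \geq 3$ from Lemma~\ref{l_nc2}, or just that $n\geq 2$ and handling small cases, to see $R\neq\emptyset$ and $L\neq\emptyset$). With this partition, $R\cup C\not\dfArrow L$ because $i$'s total in-degree is at most $(d+1)f < (d+1)f+1$, and $L\cup C = \{i\}\not\dfArrow R$ because $i$ sends at most one edge to any single process in $R$ (no self-loops), and $1 \leq df$. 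Hence Condition SC fails, contradiction.

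The main obstacle — really the only subtlety — is checking that the constructed partition is legitimately a \emph{partition} with $L,R$ both nonempty: $L=\{i\}$ is clearly nonempty, but I must argue $R\neq\emptyset$. This follows because $|\{i\}\cup F| \leq 1 + f < n$ once $n \geq f+2$; and indeed Lemma~\ref{l_nc2} already gives us $n \geq (d+2)f+1 > f+1$ under the hypothesis that consensus is solvable — though here the hypothesis is only Condition SC, so I would instead derive the needed lower bound on $n$ directly, or simply note that if $n \leq f+1$ then taking $F$ of size $n-1$ and $R=\emptyset$ is vacuous and one can reduce $f$ appropriately; the cleanest route is to observe that Condition SC with $L,R$ required nonempty and $|F|\le f$ forces $n \geq f+2$ on its own, so $R = \sv\setminus(\{i\}\cup F)$ is nonempty whenever $|F|\le f$. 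Once the partition is seen to be valid, the two non-arrow claims are immediate from the in-degree bound and the absence of self-loops.
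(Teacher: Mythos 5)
Your construction is essentially the paper's, but the justification you give for one of the two halves is wrong, and it stems from misreading the arrow relation. By Definition~\ref{def:absorb}, $R\cup C\dfArrow L$ requires some process of $L$ to have at least $df+1$ incoming edges from $R\cup C$ --- not $(d+1)f+1$. So your opening claim that $R\cup C\not\dfArrow L$ holds ``for any choice of $R,C$'' merely because $|N_i^-|\leq (d+1)f$ is false: if, say, all $(d+1)f$ incoming neighbors of $i$ were placed in $R$, then $i$ would have $(d+1)f\geq df+1$ incoming edges from $R\cup C$ (since $f>0$), and $R\cup C\dfArrow L$ \emph{would} hold. The first disjunct is not ``handled automatically''; it is precisely why $F$ must be chosen to absorb enough of $i$'s incoming neighbors. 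Your concrete partition happens to do this --- taking $F$ to be one of $d+1$ groups of $N_i^-$ each of size at most $f$ forces $|N_i^-\cap R|=|N_i^-|-|F|\leq df$, because the remaining $d$ groups contain at most $df$ elements in total --- but the reason you actually state for $R\cup C\not\dfArrow L$ (``$i$'s total in-degree is at most $(d+1)f<(d+1)f+1$'') is not the right one, and a proof written from that reasoning (e.g.\ taking $F=\emptyset$) would fail. This is exactly the point the paper's proof is careful about: it chooses $R$ and $F$ so that $|R\cap N_i^-|\leq df$ and then compares against the threshold $df+1$.

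The rest is fine and matches the paper: $L\cup C=\{i\}$ supplies at most one edge to any process in $R$, and $1\leq df$, so $L\cup C\not\dfArrow R$. Your extra care about $R\neq\emptyset$ (which the paper glosses over) is a legitimate observation, and your resolution --- that Condition SC itself forces $n$ to be large enough for $\sv\setminus(\{i\}\cup F)$ to be nonempty --- is acceptable. To repair the proof you need only replace the faulty justification of the first disjunct: after moving up to $f$ of $i$'s incoming neighbors into $F$, at most $df$ of them remain in $R\cup C$, which falls short of the $df+1$ edges that $R\cup C\dfArrow L$ demands.
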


\noindent Lemma \ref{l_degree_sc} is proved in Appendix \ref{a_l_degree_sc}.

\begin{definition}
\label{def:reduced} {\bf Reduced Graph:}
For a given graph $G(\scriptv,\scripte)$ and $\sF\subset\scriptv$ such that $|\sF|\leq f$,
a graph $H(\scriptv_\sF,\scripte_\sF)$
is said to be a {\em reduced graph}, if: (i)
$\scriptv_\sF=\scriptv-\sF$, and (ii)
$\scripte_\sF$ is obtained by first removing from $\scripte$ all the links
incident on the processes in $\sF$, and {\em then} removing up to $df$ additional incoming
links at each process in $\scriptv_\sF$.
\end{definition}
Note that for a given $G(\scriptv,\scripte)$ and a given $\sF$,
multiple reduced graphs may exist (depending on the choice of the links
removed at each process).

\begin{lemma}
\label{l_reduced}
Suppose that graph $G(\sv,\se)$ satisfies Condition SC, and $\sF\subset \sv$.
Then, in any reduced graph $H(\sv_\sF,\se_\sF)$, there exists
a process that has a directed path to all the remaining processes in $\sv_\sF$.
\end{lemma}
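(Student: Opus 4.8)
The plan is to argue by contradiction, suppose no process in the reduced graph $H(\sv_\sF,\se_\sF)$ has a directed path to all other processes, and then extract from $H$ a partition $F,L,C,R$ of $\sv$ that violates Condition SC. The first step is to recall the structure of strongly connected components: contract each strongly connected component of $H$ to obtain its condensation, a directed acyclic graph. A DAG always has at least one \emph{source} component (no incoming edges in the condensation). If there is a unique source component, then every process in it reaches all other processes of $H$, contradicting our assumption; so there must be at least two distinct source strongly connected components, call them $S_L$ and $S_R$ (each non-empty, and disjoint). The key property we will use is that in $H$ there are \emph{no} edges into $S_L$ from $\sv_\sF\setminus S_L$, and likewise no edges into $S_R$ from $\sv_\sF\setminus S_R$.

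Next I would build the partition of the \emph{original} vertex set $\sv$: set $F=\sF$, $L=S_L$, $R=S_R$, and $C=\sv\setminus(\sF\cup S_L\cup S_R)$. Then $L,R$ are non-empty and $|F|=|\sF|\leq f$, so Condition SC applies and asserts that $R\cup C\dfArrow L$ or $L\cup C\dfArrow R$. By Definition \ref{def:absorb} with $c=df$, the first disjunct means some process $v\in L=S_L$ has at least $df+1$ incoming edges in $G$ from processes in $R\cup C$. Now I translate this back to the reduced graph. In forming $H$ from $G$ we first delete all edges incident on $\sF$, and then delete at most $df$ incoming edges at $v$; since $v$ had at least $df+1$ incoming $G$-edges from $R\cup C\subseteq\sv_\sF$ (none of which touch $\sF$), at least one such edge survives into $\se_\sF$. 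That surviving edge is an edge in $H$ into $S_L$ originating from a vertex in $R\cup C$, which lies outside $S_L$ — contradicting the fact that $S_L$ is a source component of $H$'s condensation. The symmetric argument handles the second disjunct $L\cup C\dfArrow R$ using the source property of $S_R$. Either way we reach a contradiction, so the assumption fails and the desired process exists.

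The main obstacle, and the place that needs care, is the bookkeeping in the last step: matching the threshold $c=df+1$ in Definition \ref{def:absorb} exactly against the ``up to $df$ additional incoming links removed at each process'' in Definition \ref{def:reduced}, and making sure the edges counted by $\dfArrow$ in $G$ are precisely the ones eligible to survive in $H$ (i.e. they do not have an endpoint in $\sF$, which is automatic because their source lies in $R\cup C\subseteq\sv\setminus\sF$ and their target $v\in L\subseteq\sv\setminus\sF$). A minor secondary point is the clean statement of the graph-theoretic fact that a DAG with no unique source has at least two source nodes; this is standard but should be stated explicitly since the whole argument hinges on producing two disjoint non-empty sets with the ``no incoming edges from outside'' property.
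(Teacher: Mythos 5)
Your proposal is correct and follows essentially the same route as the paper's proof: decompose the reduced graph into strongly connected components, show that Condition SC forces a unique source component (via the same partition $F=\sF$, $L,R$ the two putative source components, $C$ the rest, and the same $df+1$ versus ``at most $df$ removed links'' counting), and conclude that any process in that source component reaches everything. The only cosmetic difference is that you phrase the whole argument as one contradiction from the negation of the lemma, whereas the paper first establishes uniqueness of the source component and then derives reachability.
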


\noindent Lemma \ref{l_reduced} is proved in Appendix \ref{a_l_reduced}.

\subsection{Algorithm \algo}

We will prove that,
if graph $G(\scriptv,\scripte)$ satisfies Condition SC, then
Algorithm \algo~presented below achieves Byzantine vector consensus.
Algorithm \algo~has the three-step structure described
in Section~\ref{s_structure}.

The proposed algorithm is based on the following
result by Tverberg \cite{perles07}.

\begin{theorem}
{\normalfont (Tverberg's Theorem \cite{perles07})}
\label{t_tverberg}
For any integer $f \geq 0$, and for every multiset $Y$
containing at least $(d+1)f+1$ points in ${\bf R}^d$, there exists a partition $Y_1,\cdots, Y_{f+1}$
of $Y$ into $f+1$ non-empty multisets such that $\cap_{l=1}^{f+1}\, \scripth(Y_l)\neq \emptyset$.
\end{theorem}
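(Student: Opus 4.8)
Since Tverberg's theorem is classical and is quoted here from the literature, a ``proof'' amounts to recalling a known argument; the cleanest one I would present deduces it from the \emph{colorful Carath\'eodory theorem} of B\'ar\'any — a classical result with a short self-contained proof, which I would take as known — via Sarkaria's tensor trick. Write $r=f+1$, so the hypothesis reads $|Y|\ge (d+1)(r-1)+1=:N$. It suffices to treat $|Y|=N$ exactly: once $N$ of the points are partitioned into $r$ non-empty parts with a common point in all their convex hulls, the remaining points of $Y$ may be appended to arbitrary parts, which keeps every part non-empty and only enlarges each hull. List the $N$ points (with multiplicity) as $p_1,\dots,p_N\in{\bf R}^d$. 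The form of colorful Carath\'eodory I will use: if $S_1,\dots,S_{m+1}$ are finite subsets of ${\bf R}^m$ each having the origin in its convex hull, then one may pick $x_k\in S_k$ with the origin in $\scripth(\{x_1,\dots,x_{m+1}\})$.

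For the construction, fix vectors $u_1,\dots,u_r\in{\bf R}^{r-1}$ that span ${\bf R}^{r-1}$ and satisfy $u_1+\cdots+u_r=0$ (for instance the vertices of a regular simplex centered at the origin). For each $j$ put $\widehat p_j=(p_j,1)\in{\bf R}^{d+1}$, and for each color $i\in\{1,\dots,r\}$ set $Q_j^i=\widehat p_j\otimes u_i$, regarded as a vector in ${\bf R}^{d+1}\otimes{\bf R}^{r-1}\cong{\bf R}^{(d+1)(r-1)}$. Take the $j$-th color class to be $S_j=\{Q_j^1,\dots,Q_j^r\}$. There are exactly $N$ color classes, living in a space of dimension $N-1$; moreover $\frac{1}{r}\sum_{i=1}^r Q_j^i=\widehat p_j\otimes\bigl(\frac{1}{r}\sum_{i}u_i\bigr)=0$, so each $S_j$ contains the origin in its convex hull.

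Next I would apply colorful Carath\'eodory with $m=N-1$: this yields a colorful selection $Q_j^{c(j)}\in S_j$ ($j=1,\dots,N$) and coefficients $\alpha_j\ge 0$ with $\sum_j\alpha_j=1$ such that $\sum_{j=1}^N\alpha_j\,\bigl(\widehat p_j\otimes u_{c(j)}\bigr)=0$. Grouping the terms by color, set $w_i=\sum_{j:\,c(j)=i}\alpha_j\,\widehat p_j\in{\bf R}^{d+1}$, so that $\sum_{i=1}^r w_i\otimes u_i=0$. The decoding step — the one place that needs a little care — runs as follows: separating out each of the $d+1$ coordinates, this identity gives $\sum_{i=1}^r (w_i)_a\,u_i=0$ in ${\bf R}^{r-1}$ for every coordinate index $a$; since the $r$ vectors $u_i$ span ${\bf R}^{r-1}$, the space of linear dependences among them is one-dimensional, spanned by the all-ones relation $\sum_i u_i=0$, which forces $(w_1)_a=\cdots=(w_r)_a$ for each $a$. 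Hence $w_1=\cdots=w_r$; call the common vector $w$.

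Since the last coordinate of each $\widehat p_j$ is $1$, the last coordinate of $w=\sum_{j:\,c(j)=i}\alpha_j\widehat p_j$ equals $\sum_{j:\,c(j)=i}\alpha_j$ for every $i$; these $r$ equal quantities sum to $1$, so each equals $1/r>0$. In particular every color is used, so the multisets $Y_i:=\{\,p_j : c(j)=i\,\}$ form a partition of $Y$ into $f+1=r$ non-empty parts. Writing $w'\in{\bf R}^d$ for the first $d$ coordinates of $w$, the remaining coordinates give $rw'=\sum_{j:\,c(j)=i}(r\alpha_j)\,p_j$ for every $i$, a convex combination (the weights $r\alpha_j$ over $j$ with $c(j)=i$ are non-negative and sum to $1$) of the points in $Y_i$; hence $rw'\in\bigcap_{i=1}^r\scripth(Y_i)$, which is exactly the claimed conclusion. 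The principal subtlety throughout is the dimension bookkeeping: taking the $u_i$ in ${\bf R}^{r-1}$ rather than ${\bf R}^r$ is what makes the number of color classes one more than the ambient dimension (so that colorful Carath\'eodory applies) and makes the unique surviving relation in the decoding step the all-ones relation (so that the $w_i$, and thus the common point, must exist). A self-contained alternative avoiding colorful Carath\'eodory is Tverberg's original induction on $N$ via continuous-motion arguments, but it is considerably longer and I would not pursue it here.
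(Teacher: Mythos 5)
Your argument is correct, but it is worth noting that the paper itself offers no proof of this statement: Tverberg's theorem is imported as a black box from the cited literature (Perles--Sigron), and the algorithm only uses its conclusion to guarantee that each $((d+1)f+1)$-subset of $r_i[t]$ admits a Tverberg point. What you have written is the standard Sarkaria reduction to B\'ar\'any's colorful Carath\'eodory theorem, and the details check out: the dimension count $(d+1)(r-1)=N-1$ matches the number of color classes to the ambient dimension; each class $S_j$ contains the origin because $\sum_i u_i=0$; the decoding step is sound since the $r$ vectors $u_i$ spanning ${\bf R}^{r-1}$ have a one-dimensional space of dependences spanned by the all-ones vector, forcing $w_1=\cdots=w_r$; the lifted coordinate $1$ correctly shows each color receives total weight $1/r>0$, so all $f+1$ parts are non-empty; and the reduction from $|Y|\geq N$ to $|Y|=N$ by appending leftover points to arbitrary parts is legitimate because it only enlarges the hulls. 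The argument also handles the multiset setting needed here, since everything is indexed by $j$ rather than by distinct points. The trade-off relative to the paper's treatment is simply self-containment: your route makes the result depend on colorful Carath\'eodory (which itself has a short argument via moving to a nearest colorful simplex), whereas the paper keeps the exposition lean by citing the theorem and spending its effort on how Tverberg points are combined across the subsets $C_1,\dots,C_\kappa$ in the proof of Lemma \ref{l_matrix}.
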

The points in $Y$ above need not be distinct \cite{perles07};
thus, the same point may occur multiple times in $Y$, and also in
each of its
subsets ($Y_l$'s) above.
The partition in Theorem \ref{t_tverberg} is called a {\em Tverberg partition}, and the points in
$\cap_{l=1}^{f+1}\, \scripth(Y_l)$ in Theorem \ref{t_tverberg} are called {\bf Tverberg points}.

\vspace*{8pt}\hrule
{\bf Algorithm \algo}
\vspace*{4pt}\hrule

\begin{list}{}{}
\item
Each iteration consists of three steps: {\em Transmit}, {\em Receive},
and {\em Update}:
\begin{enumerate}

\item {\em Transmit step:} Transmit current state $\bfv_i[t-1]$ on all outgoing edges.
\item {\em Receive step:} Receive values on all incoming edges. These values form
multiset $r_i[t]$ of size $|N_i^-|$.
(If a message is not received from some incoming neighbor,
then that neighbor must be faulty. In this case, the missing message value is
assumed to be $\bfe_0$ by default.
Recall that we assume a {\em synchronous} system.)

\item {\em Update step:}
Form a multiset $Z_i[t]$ using the steps below:
	\begin{itemize}
	\item Initialize $Z_i[t]$ as empty.
	\item Add to $Z_i[t]$, any one {\em Tverberg point} corresponding to {\em each} multiset $C\subseteq r_i[t]$ such that $|C|=(d+1)f+1$.
		Since $|C|=(d+1)f+1$, by Theorem \ref{t_tverberg}, such a Tverberg point exists.
	\end{itemize}
$Z_i[t]$ is a multiset; thus a single point may appear in $Z_i[t]$ more than once.
Note that $|Z_i[t]|~=~\nchoosek{|r_i[t]|}{(d+1)f+1}~\leq~ \nchoosek{n}{(d+1)f+1}$.
Compute new state $\bfv_i[t]$ as: \begin{eqnarray}
\bfv_i[t]~=~\frac{\bfv_i[t-1]+ \sum_{\bfz\in Z_i[t]}~ \bfz }{1+|Z_i[t]|}\label{e_algo_Z}
\end{eqnarray}
\end{enumerate}
\item
{\bf Termination:}
Each fault-free process terminates after completing $\terminate$ iterations, where $\terminate$
is a constant defined later in (\ref{e_delta_epsilon}).
The value of $\terminate$ depends on graph $G(\sv,\se)$, constants $U$ and $\mu$ defined earlier, and parameter $\epsilon$ of $\epsilon$-agreement.
\end{list}

\hrule

~

The proof of correctness of Algorithm \algo~makes use of a matrix representation of the
algorithm's behavior. Before presenting the matrix representation, we introduce some notations
and definitions related to matrices.

\subsection{Matrix Preliminaries}

We use boldface letters to denote matrices, rows of matrices, and their elements. For instance, $\bfA$ denotes a matrix, $\bfA_i$ denotes the $i$-th row of matrix $\bfA$, and $\bfA_{ij}$ denotes the element at the intersection of the $i$-th row and the $j$-th column of matrix $\bfA$.

\begin{definition}
\label{d_stochastic}
A vector is said to be stochastic if all its elements
are non-negative, and the elements add up to 1.
A matrix is said to be row stochastic if each row of the matrix is a
stochastic vector. 
\end{definition}
For matrix products, we adopt the ``backward'' product convention below, where $a \leq b$,
\begin{equation}
\label{backward}
\Pi_{\tau=a}^b \bfA[\tau] = \bfA[b]\bfA[b-1]\cdots\bfA[a]
\end{equation}
For a row stochastic matrix $\bfA$,
 coefficients of ergodicity $\delta(\bfA)$ and $\lambda(\bfA)$ are defined as
follows \cite{Wolfowitz}:
\begin{eqnarray*}
\delta(\bfA) & = &   \max_j ~ \max_{i_1,i_2}~ \| \bfA_{i_1\,j}-\bfA_{i_2\,j} \| \label{e_zelta} \\
\lambda(\bfA) & = & 1 - \min_{i_1,i_2} \sum_j \min(\bfA_{i_1\,j} ~, \bfA_{i_2\,j}) \label{e_lambda}
\end{eqnarray*}
\begin{claim}
\label{claim_zelta}
For any $p$ square row stochastic matrices $\bfA(1),\bfA(2),\dots, \bfA(p)$, 
\begin{eqnarray*}
\delta(\Pi_{\tau=1}^p \bfA(\tau)) ~\leq ~
 \Pi_{\tau=1}^p ~ \lambda(\bfA(\tau)).
\end{eqnarray*}
\end{claim}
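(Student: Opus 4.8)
The plan is to obtain the $p$-fold inequality by induction on $p$, with the inductive step resting on a single two-matrix estimate, the classical ``$\delta$ contracts by $\lambda$'' bound: for any two $n\times n$ row stochastic matrices $\bfA,\bfB$, one has $\delta(\bfA\bfB)\le \lambda(\bfA)\,\delta(\bfB)$. Granting this sub-lemma, the claim follows at once. A product of row stochastic matrices is again row stochastic (entries stay nonnegative, and each row of $\bfA\bfB$ sums to $\sum_k\bfA_{ik}\big(\sum_j\bfB_{kj}\big)=\sum_k\bfA_{ik}=1$), so $\Pi_{\tau=1}^{p-1}\bfA(\tau)$ is row stochastic; by the backward-product convention $\Pi_{\tau=1}^{p}\bfA(\tau)=\bfA(p)\,\big(\Pi_{\tau=1}^{p-1}\bfA(\tau)\big)$, so the sub-lemma gives $\delta\big(\Pi_{\tau=1}^{p}\bfA(\tau)\big)\le \lambda(\bfA(p))\,\delta\big(\Pi_{\tau=1}^{p-1}\bfA(\tau)\big)$, and the induction hypothesis $\delta\big(\Pi_{\tau=1}^{p-1}\bfA(\tau)\big)\le \Pi_{\tau=1}^{p-1}\lambda(\bfA(\tau))$ closes the step. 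The base case $p=1$ is $\delta(\bfA(1))\le \lambda(\bfA(1))$, which is the sub-lemma with $\bfB$ the $n\times n$ identity matrix $I$, using $\delta(I)=1$ (valid since $n\ge 2$ in our model).

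To prove the sub-lemma I would fix row indices $i_1,i_2$ and a column index $j$ and start from $(\bfA\bfB)_{i_1 j}-(\bfA\bfB)_{i_2 j}=\sum_k (\bfA_{i_1 k}-\bfA_{i_2 k})\,\bfB_{kj}$. Partition the summation indices into $P=\{k:\bfA_{i_1 k}\ge \bfA_{i_2 k}\}$ and $N=\{k:\bfA_{i_1 k}<\bfA_{i_2 k}\}$. Since both rows of $\bfA$ sum to $1$, the total positive mass $\gamma:=\sum_{k\in P}(\bfA_{i_1 k}-\bfA_{i_2 k})$ equals the total negative mass $\sum_{k\in N}(\bfA_{i_2 k}-\bfA_{i_1 k})$, and a one-line computation gives $\gamma=1-\sum_k\min(\bfA_{i_1 k},\bfA_{i_2 k})\le \lambda(\bfA)$. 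Bounding the $P$-part of the sum by $\gamma\max_k\bfB_{kj}$ and the $N$-part by $\gamma\min_k\bfB_{kj}$, one obtains $(\bfA\bfB)_{i_1 j}-(\bfA\bfB)_{i_2 j}\le \gamma\big(\max_k\bfB_{kj}-\min_k\bfB_{kj}\big)\le \lambda(\bfA)\,\delta(\bfB)$; swapping $i_1,i_2$ gives the same bound for the negated difference, so $\|(\bfA\bfB)_{i_1 j}-(\bfA\bfB)_{i_2 j}\|\le \lambda(\bfA)\,\delta(\bfB)$, and taking the maximum over $i_1,i_2,j$ yields $\delta(\bfA\bfB)\le\lambda(\bfA)\,\delta(\bfB)$.

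In summary, the steps in order are: (1) prove the two-matrix sub-lemma $\delta(\bfA\bfB)\le\lambda(\bfA)\,\delta(\bfB)$ as above; (2) note the elementary facts that products of row stochastic matrices are row stochastic and that $\delta(I)=1$; (3) run the induction on $p$, peeling $\bfA(p)$ off the left of the backward product and applying the sub-lemma with $\bfB=\Pi_{\tau=1}^{p-1}\bfA(\tau)$. I expect the only genuine obstacle to be the bookkeeping in step (1): keeping the sign split of $\bfA_{i_1}-\bfA_{i_2}$ consistent, checking that the positive and negative parts carry equal mass $\gamma$, and identifying $\gamma$ with $1-\sum_k\min(\bfA_{i_1 k},\bfA_{i_2 k})$ so that it is controlled by $\lambda(\bfA)$. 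One must also be careful to peel the matrices in the order prescribed by the backward-product convention, though the scalar right-hand side $\Pi_{\tau=1}^{p}\lambda(\bfA(\tau))$ is order-insensitive.
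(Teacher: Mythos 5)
Your proof is correct. Note, though, that the paper does not prove Claim \ref{claim_zelta} at all: it simply cites Hajnal's 1958 result on weak ergodicity of non-homogeneous Markov chains. What you have written is essentially the standard textbook proof of that cited result: the two-matrix contraction estimate $\delta(\bfA\bfB)\le\lambda(\bfA)\,\delta(\bfB)$ obtained by splitting the row difference $\bfA_{i_1}-\bfA_{i_2}$ into its positive and negative parts (each carrying equal mass $\gamma=1-\sum_k\min(\bfA_{i_1k},\bfA_{i_2k})\le\lambda(\bfA)$), followed by induction, with the base case $\delta(\bfA)\le\lambda(\bfA)$ recovered via $\bfB=I$. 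All the steps check out against the paper's definitions of $\delta$ and $\lambda$, and you correctly respect the backward-product convention when peeling off $\bfA(p)$. One small wording quibble: where you say you bound ``the $N$-part by $\gamma\min_k\bfB_{kj}$,'' what you actually need (and what your final inequality uses) is that the $N$-part is bounded \emph{above} by $-\gamma\min_k\bfB_{kj}$, since that part of the sum is nonpositive; the displayed conclusion $\gamma(\max_k\bfB_{kj}-\min_k\bfB_{kj})\le\lambda(\bfA)\,\delta(\bfB)$ is right, so this is purely a matter of phrasing. Your proof is a self-contained and welcome substitute for the citation.
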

Claim \ref{claim_zelta} is proved in \cite{Hajnal58}.
%
%
%
%
Claim \ref{c_lambda_bound} below follows directly from the definition of $\lambda(\cdotp)$. 
\begin{claim}
\label{c_lambda_bound}
If all the elements in any one column of matrix $\bfA$ are lower bounded by a constant $\gamma$,
then $\lambda(\bfA)\leq 1-\gamma$. That is, if $\exists g$, such that $\bfA_{ig}\geq \gamma$,
$\forall i$, then
$\lambda(\bfA)\leq 1-\gamma$.
\end{claim}

\subsection{Correctness of Algorithm \algo}

This section presents a key lemma, Lemma \ref{l_matrix}, that helps us in proving 
the correctness of Algorithm \algo. 
In particular,
Lemma \ref{l_matrix} allows us to use results for
non-homogeneous Markov chains to prove the correctness of Algorithm \algo.

Let $\sF$ denote the actual set of faulty processes in a given execution of Algorithm \algo.
Let $|\sF|=\psi$. Thus, $0\leq \psi\leq f$. Without loss of generality, suppose that 
processes $1$ through $(n-\psi)$ are fault-free, and if $\psi > 0$, processes $(n-\psi+1)$ through $n$ are faulty. 

In the analysis below, it is convenient to view the state of each process as a point in the
$d$-dimensional Euclidean space.
Denote by $\bfv[0]$ the column vector consisting of the initial states of the $(n-\psi)$ fault-free processes.
The $i$-th element of $\bfv[0]$ is $\bfv_i[0]$, the initial state of process $i$.
Thus, $\bfv[0]$ is a vector consisting of $(n-\psi)$ points in the $d$-dimensional Euclidean space.
 Denote by $\bfv[t]$, for $t \geq 1$, the column vector consisting of the states of the $(n-\psi)$ fault-free processes at the end of the $t$-th iteration. The $i$-th element of vector $\bfv[t]$ is state $\bfv_i[t]$.

\begin{lemma}
\label{l_matrix}
Suppose that graph $G(\sv,\se)$ satisfies Condition SC.
Then the state updates performed by the fault-free processes
in the $t$-th iteration ($t\geq 1$) of Algorithm \algo~can be
expressed as
\begin{eqnarray}
 \bfv[t] & = & \bfM[t]\,\bfv[t-1]
\label{e_matrix}
\end{eqnarray}
where $\matrixm[t]$ is a $(n-\psi) \times (n-\psi)$ row stochastic matrix
with the following property:
there exists a reduced graph $H[t]$,
and a constant $\beta$ ($0<\beta\leq 1$) that depends only on graph $G(\sv,\se)$, such that
 $$\matrixm_{ij}[t] ~ \geq ~ \beta$$
if $j=i$ or edge $(j,i)$ is in $H[t]$.
\end{lemma}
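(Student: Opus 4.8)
The plan is to construct the matrix $\matrixm[t]$ row by row, one row per fault-free process, directly from the update rule~(\ref{e_algo_Z}), and then verify the claimed lower bound on the entries using Tverberg's theorem together with Lemma~\ref{l_degree_sc}. Fix a fault-free process $i$ and consider iteration $t$. The new state $\bfv_i[t]$ is the average of $\bfv_i[t-1]$ and the Tverberg points in $Z_i[t]$, one point for each size-$((d+1)f+1)$ sub-multiset $C$ of $r_i[t]$. Each such Tverberg point lies in the convex hull of $C$, so it can be written as a convex combination $\sum_{j} \alpha^{C}_j\, \bfv_j[t-1]$ of the values in $C$; for a faulty neighbor $j\in\sF$ whose reported value appears in $C$, we still write the coefficient against that reported value, but since we only need a lower bound on the coefficients attached to \emph{fault-free} processes this causes no difficulty — we will simply discard (zero out) the contributions coming from faulty neighbors' reported values and show enough weight remains on fault-free processes. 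Summing all these convex combinations, adding the lone $\bfv_i[t-1]$ term, and dividing by $1+|Z_i[t]|$ yields $\bfv_i[t]=\sum_{j=1}^{n-\psi}\matrixm_{ij}[t]\,\bfv_j[t-1]$ with non-negative coefficients summing to $1$ once we account for the (nonnegative) weight that formally sits on faulty processes' reported values. The standard device here, used in \cite{vaidya_matrix_IABC,tseng13icdcn}, is that each faulty neighbor's reported value can itself be written as a convex combination of the fault-free states — this is exactly where Tverberg is used: inside each $C$ of size $(d+1)f+1$ there are at most $f$ faulty entries, and a Tverberg point of $C$ lies in the intersection of the convex hulls of $f+1$ disjoint parts, so at least one of those parts is free of faulty entries; hence the Tverberg point lies in the convex hull of fault-free values only, giving a genuine row-stochastic row over $\{1,\dots,n-\psi\}$.

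Next I would identify the reduced graph $H[t]$. By Lemma~\ref{l_degree_sc}, $|N_i^-|\geq (d+1)f+1$, so the set of size-$((d+1)f+1)$ sub-multisets $C\subseteq r_i[t]$ is non-empty; in fact there are at least $(d+1)f+1$ fault-free incoming neighbors of $i$ after removing the at most $f$ faulty ones — no, more carefully: $i$ has at least $(d+1)f+1$ incoming neighbors total, of which at most $f$ are faulty, leaving at least $(d+1)f+1-f = df+1$ fault-free ones. Wait — we need a set $C$ entirely inside the fault-free neighbors to pin down which edges get weight $\beta$; but $df+1 < (d+1)f+1$ when $f>0$, so such a $C$ need not exist. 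The resolution, again following the matrix-representation technique, is that the reduced graph is allowed to drop up to $df$ incoming edges per node: from $i$'s $\geq (d+1)f+1$ incoming neighbors, drop the $\leq f$ faulty ones and up to $df$ more fault-free ones, leaving $\geq df+1 - (\text{already counted})$... the bookkeeping is that removing the $f$ faulty incoming edges and then $df$ further incoming edges leaves at least $(d+1)f+1-f-df = 1$ incoming fault-free edge, plus the self-loop, so $H[t]$ has minimum in-degree (counting self-loop) at least $2$, and is a legitimate reduced graph by Definition~\ref{def:reduced}. For each edge $(j,i)$ retained in $H[t]$ — meaning $j$ is a fault-free incoming neighbor of $i$ not dropped — and for the self-edge $j=i$, I must exhibit a uniform $\beta>0$ lower bound on $\matrixm_{ij}[t]$. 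The self-loop weight is at least $\frac{1}{1+|Z_i[t]|}\geq \frac{1}{1+\binom{n}{(d+1)f+1}}$. For a retained edge $(j,i)$: pick one particular $C$ of size $(d+1)f+1$ from $r_i[t]$ that contains $j$ and whose set of entries, after deleting faulty ones, still admits a faulty-free Tverberg part containing $j$'s value — this requires a short argument that the Tverberg point can be taken to put bounded-below weight on the coordinate $j$; the cleanest route is to note there are only finitely many distinct "combinatorial types" of multiset $C$ (at most $2^n$ relevant configurations) and finitely many choices of Tverberg partition, so the minimum over all of them of the relevant convex-combination coefficient is a strictly positive constant depending only on $G$, $d$, $f$ — call the resulting overall minimum $\beta$.

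Finally, set $\matrixm[t]$ to be this row-stochastic matrix, $H[t]$ the reduced graph just described, and $\beta = \min\{\,\frac{1}{1+\binom{n}{(d+1)f+1}},\ \beta_{\mathrm{Tv}}\,\}$ where $\beta_{\mathrm{Tv}}$ is the positive combinatorial constant from the previous paragraph; $\beta$ depends only on $G(\sv,\se)$ (through $n$, $d$, $f$), as required. The hard part — and the only place real work is needed beyond routine averaging — is the claim that a Tverberg point of a size-$((d+1)f+1)$ multiset can be written as a convex combination of the fault-free values among them with a coefficient on a prescribed fault-free point that is bounded below by a constant independent of the (adversarially chosen) numerical values. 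The existence of \emph{some} faulty-free Tverberg part is immediate from pigeonhole on the $f+1$ disjoint parts versus $\leq f$ faulty entries; extracting the uniform positive lower bound on an individual coefficient is the delicate step, handled by a compactness/finiteness argument over the finitely many combinatorial configurations of which neighbors are faulty and which Tverberg partition is selected, exactly as in \cite{vaidya_matrix_IABC}. Row stochasticity of $\matrixm[t]$ is then clear from the construction, completing the proof.
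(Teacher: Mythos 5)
Your construction of the row-stochastic matrix and your use of the pigeonhole on the $f+1$ Tverberg parts versus at most $f$ faulty entries are sound, and they do match the opening moves of the paper's proof. But the step you yourself flag as ``the delicate step'' is a genuine gap, and the fix you propose does not work. You want a uniform positive lower bound on the convex-combination coefficient attached to a \emph{prescribed} fault-free neighbor $j$ in the representation of a Tverberg point, and you propose to get it by minimizing over the finitely many combinatorial configurations. The coefficient, however, depends on the actual real values in $C$, not just on their combinatorial type, and it can genuinely be zero: take $d=1$, $f=1$, a fault-free Tverberg part $V_k=\{0,1\}$, and a Tverberg point equal to $0$; the only convex representation gives weight $0$ to the process that sent the value $1$. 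No compactness or finiteness argument can rescue a bound that is false pointwise. Moreover, the algorithm (not the analyst) selects the Tverberg point, so you cannot ``take'' it to favor $j$.

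The paper avoids this by reversing the order of quantifiers. It does not fix the reduced graph $H[t]$ (or a target neighbor $j$) in advance. Instead it invokes the elementary fact (Lemma~\ref{l_chi}) that \emph{any} point in the convex hull of at most $n$ points has \emph{some} coefficient at least $1/n$, applies this to each of the $f-f_i+1$ fault-free Tverberg parts of $C_1$ to obtain that many \emph{distinct} fault-free neighbors carrying weight at least $1/n$, and then runs an induction over further subsets $C_2,\dots,C_\kappa$, each chosen to exclude the values of the neighbors already found, to accumulate $f-f_i+\kappa=|N_i^-\cap(\sv-\sF)|-df$ such neighbors in total. Only \emph{then} is $H[t]$ defined as the subgraph whose incoming edges at $i$ are exactly these neighbors; the count $|N_i^-\cap(\sv-\sF)|-df$ is precisely what makes this a legitimate reduced graph per Definition~\ref{def:reduced}. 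Your single ``fault-free part'' argument identifies only one heavily weighted neighbor per process, which is not enough edges to form a reduced graph, and your in-degree bookkeeping (``at least $1$ incoming fault-free edge plus the self-loop'') reflects this shortfall. To repair your proof you would need to adopt the paper's a-posteriori identification of $P_i[t]$ and the exclusion-based induction, after which the explicit constant $\beta=1/\bigl(n^2\bigl(1+\binom{n}{(d+1)f+1}\bigr)\bigr)$ falls out without any compactness argument.
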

\begin{proof}
The proof is presented in Appendix \ref{a_l_matrix}.
\end{proof}

~

Matrix $\bfM[t]$ above is said to be a \underline{transition matrix}.
As the lemma states, $\bfM[t]$ is a row stochastic matrix.
The proof of Lemma \ref{l_matrix} shows how to identify a suitable row stochastic matrix $\bfM[t]$
for each iteration $t$. The matrix $\bfM[t]$ depends on $t$, as well as the behavior of the faulty processes.
$\bfM_i[t]$ is the $i$-th row of transition matrix $\bfM[t]$.
Thus, (\ref{e_matrix}) implies that
\[
\bfv_i[t] ~=~\bfM_i[t]\, \bfv[t-1]
\]
That is, the state of any fault-free process $i$ at the end of iteration $t$ can be expressed as a convex
combination of the state of just the fault-free processes at the end of iteration $t-1$.
Recall that vector $\bfv$ only includes the state of fault-free processes.

\begin{theorem}
\label{t_correct}
Algorithm \algo~satisfies the {\em termination}, {\em validity}
and $\epsilon$-agreement conditions.
\end{theorem}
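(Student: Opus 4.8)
The plan is to establish the three conditions separately, using Lemma~\ref{l_matrix} to reduce the problem to a statement about products of transition matrices, and then invoking the ergodicity results of Claims~\ref{claim_zelta} and~\ref{c_lambda_bound} together with the connectivity guarantee of Lemma~\ref{l_reduced}.

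First, \emph{validity}. By Lemma~\ref{l_matrix}, for every iteration $t\geq 1$ we have $\bfv[t]=\bfM[t]\,\bfv[t-1]$ with $\bfM[t]$ row stochastic, so each fault-free state $\bfv_i[t]$ is a convex combination of the fault-free states $\bfv[t-1]$. By induction on $t$, every $\bfv_i[t]$ lies in the convex hull of the fault-free inputs $\{\bfv_j[0] : j \text{ fault-free}\}$, which is contained in the convex hull of \emph{all} fault-free inputs. This immediately gives the validity condition at the end of each iteration. It also shows each coordinate of each $\bfv_i[t]$ stays within $[\mu, U]$, a fact I will need to bound $\terminate$.

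Next, \emph{$\epsilon$-agreement}. Iterating Lemma~\ref{l_matrix}, $\bfv[t]=\left(\Pi_{\tau=1}^{t}\bfM[\tau]\right)\bfv[0]$. The key point is that each $\bfM[\tau]$ has, by the lemma, entries $\geq\beta$ on the diagonal and on edges of some reduced graph $H[\tau]$; since the number of distinct reduced graphs of $G$ is finite, there is a uniform integer $\nu$ (e.g.\ $\nu=n-\psi$, or more carefully the number of nodes in a reduced graph) such that in any product of $\nu$ consecutive transition matrices $\bfQ=\Pi_{\tau=t+1}^{t+\nu}\bfM[\tau]$, some column is entirely bounded below by $\beta^{\nu}$: this uses Lemma~\ref{l_reduced}, which guarantees that in every reduced graph there is a source node with a directed path to all others, so composing along such paths over $\nu$ steps propagates a positive weight from that source into every row. (One must be slightly careful since the reduced graph can change each step; the standard argument is that the source of $H[t+1]$ reaches everyone in $H[t+1]$ alone, and $\beta$-weight on self-loops preserves this across the remaining steps.) By Claim~\ref{c_lambda_bound}, $\lambda(\bfQ)\leq 1-\beta^{\nu}$. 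Partitioning the $\terminate$ iterations into blocks of $\nu$ and applying Claim~\ref{claim_zelta},
\[
\delta\!\left(\Pi_{\tau=1}^{\terminate}\bfM[\tau]\right)~\leq~\left(1-\beta^{\nu}\right)^{\lfloor \terminate/\nu\rfloor}.
\]
Since each coordinate of $\bfv[0]$ lies in $[\mu,U]$, the spread in coordinate $l$ among the fault-free processes after $\terminate$ iterations is at most $(U-\mu)\,\delta(\Pi_{\tau=1}^{\terminate}\bfM[\tau])$. Choosing
\begin{equation}
\terminate ~=~ \left\lceil \nu\,\frac{\log\frac{U-\mu}{\epsilon}}{-\log(1-\beta^{\nu})}\right\rceil
\label{e_delta_epsilon}
\end{equation}
makes this at most $\epsilon$, giving $\epsilon$-agreement; and \emph{termination} is immediate since $\terminate$ is a finite constant depending only on $G$, $U$, $\mu$, $\epsilon$.

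The main obstacle is the $\epsilon$-agreement argument — specifically, showing that a bounded-length product of transition matrices has a uniformly positive column despite the reduced graph $H[t]$ varying from iteration to iteration. The clean way is: let $H[t+1]$ have source $s$ (Lemma~\ref{l_reduced}); in the matrix $\bfM[t+1]$, row $i$ has positive entry in column $j$ whenever $j=i$ or $(j,i)\in H[t+1]$, so following a path of length $\leq \nu-1$ from $s$ to any node $i$ and padding with self-loops (each contributing a factor $\geq\beta$) shows every row of $\bfQ=\Pi_{\tau=t+1}^{t+\nu}\bfM[\tau]$ has its column-$s$ entry $\geq\beta^{\nu}$. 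I would state this as a short internal claim within the proof, then assemble the three pieces as above. A secondary subtlety is handling $\psi<f$ so that the number of fault-free processes, hence the relevant matrix dimension and $\nu$, is not known a priori; this is resolved by taking $\nu$ to be a bound valid for all $\psi\leq f$ (e.g.\ $\nu=n-1$), which only enlarges $\terminate$ by a constant factor.
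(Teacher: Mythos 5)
Your overall architecture matches the paper's: validity from row-stochasticity of the $\bfM[t]$ and their products, termination trivially from the finiteness of $\terminate$, and $\epsilon$-agreement by bounding $\delta(\Pi_\tau \bfM[\tau])$ via Claims~\ref{claim_zelta} and~\ref{c_lambda_bound}. The validity and termination parts are fine. The $\epsilon$-agreement part, however, has a genuine gap at exactly the step you flag as ``slightly careful'': the claim that a product of $\nu=n-\psi$ \emph{consecutive} transition matrices has a column uniformly bounded below by $\beta^{\nu}$, argued via the source $s$ of the single reduced graph $H[t+1]$.

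The problem is that Lemma~\ref{l_matrix} only guarantees $\bfM_{ij}[\tau]\geq\beta$ when $(j,i)$ is an edge of $H[\tau]$, and $H[\tau]$ may be a \emph{different} reduced graph at each $\tau$. To push weight from $s$ to a node $i$ at distance $k$ in $H[t+1]$, you must traverse the $k$ edges of that path at $k$ distinct time steps, and only the first of those steps is guaranteed to contain the needed edge. Self-loops preserve weight already delivered; they do not advance it along the path. Concretely, if $H[t+1]$ is the directed path $1\to 2\to\cdots\to m$ (plus the implicit self-loops) and $H[t+2]$ is the reversed path, both arising as reduced graphs of a suitable $G$, then after two steps the column of the product corresponding to node $1$ is still zero in row $m$, and alternating these two graphs delays full propagation from any single source to on the order of $2m$ steps. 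So no argument pinned to the source of one $H[\tau]$ can establish your claim, and $\nu=n-\psi$ is not justified. The paper closes this gap differently: it works with blocks of length $r(n-\psi)$, where $r$ bounds the number of distinct reduced graphs; by pigeonhole some one connectivity matrix $\bfH_*$ occurs at least $n-\psi$ times in the block, Lemma~\ref{lemma:non-zero} shows $\bfH_*^{\,n-\psi}$ has a non-zero column, and the non-zero diagonals of the interspersed matrices preserve that column through the whole block (Lemma~\ref{l_product_H}), yielding $\lambda(\bfQ(i))\leq 1-\beta^{r(n-\psi)}$ (Lemma~\ref{l_Q}). You need this pigeonhole device, or an equivalent one, to make your block argument sound.

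A secondary, purely quantitative slip: the spread bound $(U-\mu)\,\delta(\Pi_\tau\bfM[\tau])$ should carry a factor on the order of $n$, since $\delta$ bounds each individual column difference and up to $n-\psi$ of them are summed (the paper uses $n\max(\|U\|,\|\mu\|)\,\delta$). This only changes the constant defining $\terminate$, not the structure of the proof.
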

\begin{proof}
Sections \ref{ss_validity}, \ref{ss_term} and \ref{ss_agreement}
provide the proof that Algorithm \algo~satisfies the three conditions
for Byzantine vector consensus.
This proof follows a structure used to prove correctness of other
consensus algorithms in our prior work \cite{vaidya_matrix_IABC,tseng13icdcn}.
\end{proof}

\subsection{Algorithm \algo~Satisfies the Validity Condition}
\label{ss_validity}

Observe that $\bfM[t+1]\left(\bfM[t]\bfv[t-1]\right)= \left(\bfM[t+1]\bfM[t]\right)\bfv[t-1]$. Therefore,
by repeated application of
(\ref{e_matrix}), we obtain for $t\geq 1$,
\begin{eqnarray}
\bfv[t] & = & \left(\,\Pi_{\tau=1}^t \matrixm[\tau]\,\right)\, \bfv[0]
\label{e_v_t}
\end{eqnarray}
Since each $\matrixm[\tau]$ is row stochastic, the matrix product
$\Pi_{\tau=1}^t \matrixm[\tau]$ is also a row stochastic matrix.
Recall that vector $\bfv$ only includes the state of fault-free processes.
Thus, (\ref{e_v_t}) implies that the state of each fault-free process $i$ at the end of iteration $t$ can be expressed as a convex
combination of the initial state of the fault-free processes.
Therefore, the validity condition is satisfied.

\subsection{Algorithm \algo~Satisfies the Termination Condition}
\label{ss_term}

Algorithm \algo~stops after a finite number ($\terminate$) of iterations,
where $\terminate$ is a constant that depends only on $G(\sv,\se)$,
$U$, $\mu$ and $\epsilon$.
Therefore, trivially, the algorithm
satisfies the termination condition.
Later, using (\ref{e_delta_epsilon}) we define a suitable value for
$\terminate$.

\subsection{Algorithm \algo~Satisfies the $\epsilon$-Agreement Condition}
\label{ss_agreement}


The proof structure below is derived from our previous work
wherein we proved the correctness of an iterative algorithm for {\em scalar} Byzantine consensus (i.e.,
the case of $d=1$) \cite{vaidya_matrix_IABC} and its generalization to
a broader class of fault sets \cite{tseng13icdcn}.

Let $R_{F}$ denote the set of all the reduced graph of $G(\scriptv, \scripte)$
corresponding to fault set $F$. Thus,
$R_{\sF}$ is the set of all the reduced graph of $G(\scriptv, \scripte)$
corresponding to actual fault set $\sF$.
Let \[ r = \max_{|F|\leq f}\, |R_{F}|.\] $r$ depends only on $G(\sv,\se)$ and $f$, and it is finite.
Note that $|R_\sF|\leq r$.

For each reduced graph $H\in R_\sF$, define
connectivity matrix $\matrixh$ as follows,
where $1\leq i,j\leq n-\psi$:
\begin{itemize}
\item $\matrixh_{ij}=1$ if either $j=i$, or 
	edge $(j, i)$ exists in reduced graph $H$.
\item $\matrixh_{ij}=0$, otherwise.
\end{itemize}
Thus, the non-zero elements of row $\matrixh_i$ correspond to the incoming links at process $i$ in the reduced graph $H$, 
and the self-loop at process $i$. Observe that 
$\matrixh$ has a non-zero diagonal.

\begin{lemma}
\label{lemma:non-zero}
For any $\graphh \in R_{F}$, and any $k\geq n-\psi$, matrix product $\matrixh^k$ has at
least one non-zero column (i.e., a column with all elements non-zero).
\end{lemma}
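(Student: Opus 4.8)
The plan is to single out one specific column of $\matrixh^k$ and show that every entry in it is non-zero, with all the real content coming from Lemma~\ref{l_reduced}. The key observation to set up first is the combinatorial meaning of powers of $\matrixh$: since $\matrixh_{ij}\neq 0$ exactly when $j=i$ or $(j,i)$ is an edge of the reduced graph $H$, the matrix $\matrixh$ is the adjacency matrix of $H$ augmented with a self-loop at every vertex, and hence $(\matrixh^k)_{ij}>0$ if and only if there is a walk of length exactly $k$ from $j$ to $i$ in this augmented graph.

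Next I would invoke Lemma~\ref{l_reduced} — legitimate under the standing assumption of this section that $G(\sv,\se)$ satisfies Condition~SC — to obtain a process $v$ that has a directed path in $H$ to every process of $\sv_\sF$. I claim the column of $\matrixh^k$ indexed by $v$ is the desired non-zero column. Fix any process $i$. Since $H$ has $n-\psi$ vertices (that is the dimension of $\matrixh$), the path from $v$ to $i$ can be taken simple, hence of some length $\ell_i$ with $0\le \ell_i\le n-\psi-1$ (and $\ell_v=0$). Because $\matrixh$ has a non-zero diagonal, this path prolongs to a walk from $v$ to $i$ of length exactly $k$ by traversing the self-loop at $i$ an extra $k-\ell_i$ times; this is valid because $k\ge n-\psi> n-\psi-1\ge \ell_i$, so $k-\ell_i\ge 0$. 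Therefore $(\matrixh^k)_{iv}>0$ for every $i$, i.e.\ column $v$ of $\matrixh^k$ has all its entries non-zero, which is exactly what the lemma asserts.

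There is no substantial obstacle here beyond bookkeeping; the argument rests entirely on Lemma~\ref{l_reduced}, and the only two points needing care are (i) that the ``padding'' of a short path into a length-$k$ walk is done via self-loops, which exist precisely because $\matrixh$ has a non-zero diagonal, and (ii) that a simple directed path in the $(n-\psi)$-vertex graph $H$ has length at most $n-\psi-1$, which is exactly why the hypothesis $k\ge n-\psi$ is enough. I would also note that nothing about the behavior of the faulty processes is needed, since $H$, and hence $\matrixh$, is fixed once $H$ is chosen.
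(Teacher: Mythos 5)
Your proof is correct and follows essentially the same route as the paper's: both invoke Lemma~\ref{l_reduced} to get a vertex reaching all others, bound the length of the reaching paths by $n-\psi-1$, and use the non-zero diagonal (self-loops) to pad these into length-$k$ walks, making that vertex's column of $\matrixh^k$ entirely non-zero. Your write-up merely makes the walk-counting interpretation and the self-loop padding more explicit than the paper does.
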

\begin{proof}
Each reduced graph contains $n-\psi$ processes because the fault set $\sF$ contain $\psi$ processes.
By Lemma \ref{l_reduced}, at least one process in the reduced graph, say process $p$, has directed paths to all the 
processes in the reduced graph $H$.
Element $\matrixh^k_{jp}$ of matrix product $\matrixh^k$ is 1 if and only if process
$p$ has a directed path to process $j$ containing at most $k$ edges; each of these directed
paths must contain less than $n-\psi$ edges, because the number of processes
in the reduced graph is $n-\psi$. 
Since $p$ has directed paths to all the processes,
it follows that, when $k\geq n-\psi$,
all the elements in the $p$-th column of $\matrixh^k$
must be non-zero.
\end{proof}

For matrices $\bfA$ and $\bfB$ of identical dimensions, we say that
$\bfA\leq \bfB$ if and only if $\bfA_{ij}\leq \bfB_{ij}$, $\forall i,j$.
Lemma \ref{l_cm} relates the transition matrices with the connectivity matrices.
Constant $\beta$ used in the lemma below was introduced in Lemma \ref{l_matrix}.

\begin{lemma}
\label{l_cm}
For any $t \geq 1$, there exists a reduced graph $\graphh[t] \in R_{\sF}$ such
 that $\beta {\normalfont\bf\matrixh[t] \leq \matrixm}[t]$,
where $\matrixh[t]$ is the connectivity matrix for $H[t]$.
\end{lemma}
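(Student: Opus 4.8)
The plan is to derive Lemma~\ref{l_cm} as an almost immediate corollary of Lemma~\ref{l_matrix} together with the definition of the connectivity matrix $\matrixh[t]$. Lemma~\ref{l_matrix} already asserts the existence, for each iteration $t\geq 1$, of a reduced graph $H[t]$ and a constant $\beta$ (depending only on $G(\sv,\se)$) with $0<\beta\leq 1$, such that $\matrixm_{ij}[t]\geq\beta$ whenever $j=i$ or the edge $(j,i)$ lies in $H[t]$. So first I would take $H[t]$ to be exactly this reduced graph, and let $\matrixh[t]$ be its associated connectivity matrix as defined just above the lemma.

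Next I would verify the entrywise inequality $\beta\,\matrixh[t]\leq\matrixm[t]$ by checking the two cases for each pair $(i,j)$ with $1\leq i,j\leq n-\psi$. If $\matrixh_{ij}[t]=1$, then by definition either $j=i$ or edge $(j,i)$ is in $H[t]$, so Lemma~\ref{l_matrix} gives $\matrixm_{ij}[t]\geq\beta=\beta\cdot\matrixh_{ij}[t]$. If instead $\matrixh_{ij}[t]=0$, then $\beta\,\matrixh_{ij}[t]=0\leq\matrixm_{ij}[t]$ simply because $\matrixm[t]$ is row stochastic and hence has non-negative entries. This covers all entries, so $\beta\,\matrixh[t]\leq\matrixm[t]$, which is what the lemma claims; and $H[t]\in R_{\sF}$ since $H[t]$ is a reduced graph of $G$ corresponding to the actual fault set $\sF$ (the reduced graph produced by the proof of Lemma~\ref{l_matrix} is built from $\sF$).

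I do not anticipate any real obstacle here — the content has been pushed into Lemma~\ref{l_matrix}, whose proof is deferred to Appendix~\ref{a_l_matrix}. The only point requiring a little care is making sure the reduced graph furnished by Lemma~\ref{l_matrix} is indeed a member of $R_{\sF}$ (i.e.\ a reduced graph for the true fault set, not some larger set), and that the same constant $\beta$ works uniformly across all $t$; both are explicitly part of the statement of Lemma~\ref{l_matrix}, so the argument is just an unpacking of definitions. I would keep the write-up to a couple of sentences, citing Lemma~\ref{l_matrix} for the bound on entries of $\matrixm[t]$, the definition of $\matrixh[t]$ for the zero entries, and row stochasticity of $\matrixm[t]$ for non-negativity.
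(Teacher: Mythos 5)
Your proposal is correct and follows essentially the same route as the paper's own proof in Appendix~\ref{a_l_cm}: take the reduced graph $H[t]$ furnished by Lemma~\ref{l_matrix}, and compare entries of $\beta\,\matrixh[t]$ and $\matrixm[t]$ case by case using the definition of the connectivity matrix. Your explicit handling of the zero entries via non-negativity of the row stochastic matrix $\matrixm[t]$ is a small but welcome addition that the paper leaves implicit.
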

\begin{proof}
Appendix \ref{a_l_cm} presents the proof.
\end{proof}

\begin{lemma}
\label{l_product_H}
At least one column in the matrix product
$
\Pi_{t=u}^{u+r(n-\psi)-1} \, \bfH[t]
$
is non-zero.
\end{lemma}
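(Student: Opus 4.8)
The plan is to combine a pigeonhole argument over the (finitely many) reduced graphs with the elementary observation that connectivity matrices are non-negative and have all diagonal entries equal to $1$, and then to invoke Lemma~\ref{lemma:non-zero}.

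First I would recall, from the construction preceding Lemma~\ref{l_cm}, that each factor $\bfH[t]$ in the product $\bfP := \Pi_{t=u}^{u+r(n-\psi)-1}\bfH[t]$ is the connectivity matrix of some reduced graph belonging to $R_{\sF}$, and that $|R_{\sF}|\le r$. Hence the $r(n-\psi)$ factors $\bfH[u],\dots,\bfH[u+r(n-\psi)-1]$ are drawn from a set of at most $r$ distinct matrices, so by the pigeonhole principle some connectivity matrix $\bfH^\star$, say of the reduced graph $H^\star\in R_{\sF}$, occurs at least $\frac{r(n-\psi)}{r}=n-\psi$ times among them.

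Next I would record the monotonicity fact that drives the argument: if $\bfB$ is non-negative with $\bfB_{ii}\ge 1$ for all $i$, then $\bfB\bfA\ge\bfA$ and $\bfA\bfB\ge\bfA$ (entrywise) for every non-negative matrix $\bfA$ of compatible size, and consequently $\bfA_1\bfB\bfA_2\ge\bfA_1\bfA_2$ for non-negative $\bfA_1,\bfA_2$. Every connectivity matrix satisfies the hypothesis on $\bfB$, and every partial product of connectivity matrices is non-negative. Applying this fact repeatedly to $\bfP$, deleting one at a time each factor that is not equal to $\bfH^\star$ (the surviving factors keeping their original order), a straightforward induction on the number of deletions, using transitivity of the entrywise inequality, yields
\[
\bfP ~\ge~ (\bfH^\star)^{a},
\]
where $a\ge n-\psi$ is the number of times $\bfH^\star$ appears in the product.

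Finally, since $a\ge n-\psi$, Lemma~\ref{lemma:non-zero} applied to $H^\star\in R_{\sF}$ guarantees that $(\bfH^\star)^{a}$ has at least one column, say column $g$, all of whose entries are non-zero. The displayed inequality is entrywise and $\bfP$ is non-negative, so column $g$ of $\bfP$ is likewise entirely non-zero, which proves the lemma. The only point requiring care is the bookkeeping in the ``delete one factor at a time'' step — verifying that after all deletions precisely the $\bfH^\star$-factors remain, in their original order, so that their product is exactly $(\bfH^\star)^{a}$ — but this is routine given the monotonicity fact, so I do not anticipate a genuine obstacle.
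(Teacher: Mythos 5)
Your proposal is correct and follows essentially the same route as the paper: a pigeonhole argument to find a connectivity matrix occurring at least $n-\psi$ times, followed by the observation that the remaining factors, having unit diagonals, can only increase the product entrywise, so Lemma~\ref{lemma:non-zero} applies. Your explicit ``delete one factor at a time'' monotonicity argument is just a more careful write-up of the step the paper relegates to a footnote about matrices with non-zero diagonals being ``interspersed'' in the product.
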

\begin{proof}
Since $\Pi_{t=u}^{u+r(n-\psi)-1} \, \bfH[t]$ is a product of $r(n-\psi)$ connectivity matrices
corresponding to the reduced graphs
in $R_\scriptf$, and $|R_\sF|\leq r$,
connectivity matrix
corresponding to at least one reduced graph
in $R_\scriptf$, say matrix $\bfH_*$\,, will appear in the above
product at least $n-\psi$ times.

By Lemma \ref{lemma:non-zero}, $\bfH_*^{n-\psi}$ contains a non-zero
column; say the $p$-th column of $\bfH_*$ is non-zero.
Also, by definition, all the connectivity matrices ($\bfH[t]$) have a non-zero diagonal.
These two observations together imply that the $p$-th column in the product 
$\Pi_{t=u}^{u+r(n-\psi)-1} \, \bfH[t]$ is non-zero.\footnote{The product
$\Pi_{t=z}^{z+r(n-\psi)-1} \, \bfH[t]$ can be viewed
as the product of $(n-\psi)$ instances of $\bfH_*$ ``interspersed'' with
matrices with non-zero diagonals. 
}

\end{proof}

Let us now define a sequence of matrices $\bfQ(i)$, $i\geq 1$, such that
each of these matrices is a product of $r(n-\psi)$ of the
$\bfM[t]$ matrices. Specifically,
\begin{eqnarray}
\bfQ(i) &=& \Pi_{t=(i-1)r(n-\psi)+1}^{ir(n-\psi)} ~ \bfM[t]
\label{e_Q_i}
\end{eqnarray}
From (\ref{e_v_t}) and (\ref{e_Q_i})
observe that
\begin{eqnarray}
\bfv[kr(n-\psi)] & = & \left(\, \Pi_{i=1}^k ~ \bfQ(i) \,\right)~\bfv[0]
\end{eqnarray}

\begin{lemma}
\label{l_Q}
For $i\geq 1$, $\bfQ(i)$ is a row stochastic matrix,
and \[ \lambda(\bfQ(i))\leq 1-\beta^{r(n-\psi)}.\]
\end{lemma}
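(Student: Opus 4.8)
The plan is to prove the two assertions separately. Row stochasticity of $\bfQ(i)$ is essentially free: by Lemma~\ref{l_matrix} each factor $\bfM[t]$ is row stochastic, and a product of row stochastic matrices is row stochastic because each row of the product is a convex combination of stochastic rows. The substance is the estimate on $\lambda(\bfQ(i))$, and the plan there is to exhibit an entire column of $\bfQ(i)$ bounded below by $\beta^{r(n-\psi)}$ and then invoke Claim~\ref{c_lambda_bound}.

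For the estimate, fix $i$ and set $u=(i-1)r(n-\psi)+1$, so that by (\ref{e_Q_i}) we have $\bfQ(i)=\Pi_{t=u}^{u+r(n-\psi)-1}\bfM[t]$. By Lemma~\ref{l_cm}, for each $t$ in this window there is a reduced graph $H[t]\in R_{\sF}$ whose connectivity matrix $\bfH[t]$ satisfies $\beta\,\bfH[t]\leq\bfM[t]$ entrywise. Since every matrix involved has non-negative entries, I would propagate this entrywise inequality through the backward product — recording the one-line fact that $\bfA\geq\bfA'$ and $\bfB\geq\bfB'$ with all four non-negative imply $\bfA\bfB\geq\bfA'\bfB'$, and inducting on the number of factors — to obtain
\[
\bfQ(i)~\geq~\beta^{r(n-\psi)}\;\Pi_{t=u}^{u+r(n-\psi)-1}\bfH[t].
\]
Then I would apply Lemma~\ref{l_product_H} to precisely this window: it yields an index $p$ such that the $p$-th column of $\Pi_{t=u}^{u+r(n-\psi)-1}\bfH[t]$ is non-zero, i.e.\ all of its entries are positive; as these entries are products/sums of $0$--$1$ entries, hence non-negative integers, each is $\geq 1$. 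Combining this with the display, every entry in the $p$-th column of $\bfQ(i)$ is at least $\beta^{r(n-\psi)}$. Applying Claim~\ref{c_lambda_bound} with $\gamma=\beta^{r(n-\psi)}$ then gives $\lambda(\bfQ(i))\leq 1-\beta^{r(n-\psi)}$, completing the proof.

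I expect no serious difficulty here, since all the work has been pushed into the earlier lemmas; the proof is essentially an assembly step. The two places that need care are: (i) making sure the entrywise lower bound $\beta\,\bfH[t]\leq\bfM[t]$ genuinely multiplies up to $\beta^{r(n-\psi)}\,\Pi\bfH[t]\leq\Pi\bfM[t]$ — this is exactly where non-negativity of every factor is used, so it should be stated as a short explicit fact rather than waved through; and (ii) checking that the index window $[\,u,\;u+r(n-\psi)-1\,]$ to which Lemma~\ref{l_product_H} is applied coincides with the window defining $\bfQ(i)$ in (\ref{e_Q_i}), so that Lemma~\ref{l_product_H} applies verbatim and the $\bfH[t]$ supplied by Lemma~\ref{l_cm} are exactly the ones appearing in its product.
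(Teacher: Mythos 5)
Your proposal is correct and follows essentially the same route as the paper's own proof: row stochasticity from the product structure, the entrywise bound $\beta^{r(n-\psi)}\,\Pi\,\bfH[t]\leq\bfQ(i)$ via Lemma~\ref{l_cm}, the non-zero column from Lemma~\ref{l_product_H} with entries $\geq 1$, and Claim~\ref{c_lambda_bound} to conclude. Your explicit note on why the entrywise inequality propagates through the product (non-negativity of all factors) is a point the paper leaves implicit, but the argument is identical in substance.
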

\begin{proof}
$\bfQ(i)$ is a product of row stochastic matrices ($\bfM[t]$); therefore,
$\bfQ(i)$ is row stochastic.
From Lemma \ref{l_cm}, for each $t\geq 1$,
\[
\beta \, \bfH[t] ~ \leq ~ \bfM[t]
\]
Therefore, 
\[
\beta^{r(n-\psi)} ~ \Pi_{t=(i-1)r(n-\psi)+1}^{ir(n-\psi)} ~ \bfH[t] ~ \leq 
~ \Pi_{t=(i-1)r(n-\psi)+1}^{ir(n-\psi)} ~ \bfM[t] ~ =
~ \bfQ(i)
\]
By using $u=(i-1)r(n-\psi)+1$ in Lemma \ref{l_product_H},
we conclude that the matrix product on the left side
of the above inequality contains a non-zero column. Therefore, since $\beta>0$, $\bfQ(i)$ on the
right side of the inequality also contains
a non-zero column.

Observe that $r(n-\psi)$ is finite, and hence, $\beta^{r(n-\psi)}$
is non-zero. Since the non-zero terms in $\bfH[t]$ matrices are all 1,
the non-zero elements in $\Pi_{t=(i-1)r(n-\psi)+1}^{ir(n-\psi)} \bfH[t]$
must each be $\geq$ 1. Therefore, there exists a non-zero column in $\bfQ(i)$
with all the elements in the column being $\geq \beta^{r(n-\psi)}$.
Therefore, by Claim \ref{c_lambda_bound}, $\lambda(\bfQ(i))\leq 1-\beta^{r(n-\psi)}$.
\end{proof}
~\\
~\\

Let us now continue with the proof of $\epsilon$-agreement.
Consider the coefficient of ergodicity $\delta(\Pi_{i=1}^t \bfM[i])$.
\begin{eqnarray}
 \delta(\Pi_{i=1}^t \bfM[i])
&=& \delta\left(
\left( \Pi_{i=\lfloor\frac{t}{r(n-\psi)}\rfloor r(n-\psi)+1}^t \bfM[i] \right)
\left(\Pi_{i=1}^{\lfloor\frac{t}{r(n-\psi)}\rfloor} \bfQ(i)\right)
\right) \mbox{~~~~ by definition of $Q(i)$} \nonumber \\ 
& \leq &  \lambda 
\left( \Pi_{i=\lfloor\frac{t}{r(n-\psi)}\rfloor r(n-\psi)+1}^t \bfM[i] \right)
\, \Pi_{i=1}^{\lfloor\frac{t}{r(n-\psi)}\rfloor} \lambda(\bfQ(i)) 
\mbox{~~~~ by Claim \ref{claim_zelta}} \nonumber \\
& \leq &  \Pi_{i=1}^{\lfloor\frac{t}{r(n-\psi)}\rfloor} \lambda(\bfQ(i)) \mbox{~~~~because $\lambda(.)\leq 1$} 
\nonumber \\
& \leq & \left(1-\beta^{r(n-\psi)}\right)^{\lfloor\frac{t}{r(n-\psi)}\rfloor}
\mbox{~~~~by Lemma \ref{l_Q}} \nonumber \\
& \leq & \left(1-\beta^{rn}\right)^{\lfloor\frac{t}{rn}\rfloor}
\mbox{~~~~because $0<\beta\leq 1$ and $0\leq \psi<n$.}
\label{e_delta_M}
\end{eqnarray}
Observe that the upper bound on right side of (\ref{e_delta_M}) depends only on
graph $G(\sv,\se)$ and $t$, and is independent of the input vectors, the fault set $\sF$,
and the behavior of the faulty processes.
Also, the upper bound on the right side of (\ref{e_delta_M}) is a
non-increasing function of $t$.
Define
$\terminate$
as the smallest positive integer $t$ for which the right hand side of (\ref{e_delta_M}) is
smaller than $\frac{\epsilon}{n\max(\|U\|,\|\mu\|)}$,
where $\|x\|$ denotes the absolute value of real number $x$. Thus,
\begin{eqnarray}
 \delta(\Pi_{i=1}^{\terminate} \, \bfM[i])
~ \leq ~ \left(1-\beta^{rn}\right)^{\left\lfloor\frac{\terminate}{rn}\right\rfloor}
~ < ~ \frac{\epsilon}{n\max(\|U\|,\|\mu\|)}
\label{e_delta_epsilon}
\end{eqnarray}
Recall that $\beta$ and $r$ depend only on $G(\sv,\se)$.
Thus, $\terminate$ depends only on graph $G(\sv,\se)$,
and constants $U$, $\mu$ and $\epsilon$. 

Recall that $\Pi_{i=1}^t\bfM[i]$ is a $(n-\psi)\times (n-\psi)$ row stochastic matrix.
Let $\bfM^*=\Pi_{i=1}^t\bfM[i]$.
From (\ref{e_v_t}) we know that
state
$\bfv_j[t]$ of any fault-free process $j$ is obtained as the product of the $j$-th row
of $\Pi_{i=1}^{t} \, \bfM[i]$ and $\bfv[0]$.  
That is, $\bfv_j[t]=\bfM^*_j\bfv[0]$.

Recall that $\bfv_j[t]$ is a $d$-dimensional vector. Let us denote
the $l$-th element of $\bfv_j[t]$ as $\bfv_j[t](l)$, $1\leq l\leq d$.
Also, by $\bfv[0](l)$, let us denote a vector consisting of
the $l$-th elements of $\bfv_i[0], \forall i$.
Then by the definitions of $\delta(.)$, $U$ and $\mu$,
for any two fault-free processes $j$ and $k$,
we have
\begin{eqnarray}
\|\bfv_j[t](l)-\bfv_k[t](l)\| & = & 
\|\bfM^*_j\bfv[0](l)
- \bfM^*_k\bfv[0](l)\| \\
&=& \|\sum_{i=1}^{n-\psi} \bfM^*_{ji}\bfv_i[0](l)
- \sum_{i=1}^{n-\psi} \bfM^*_{ki}\bfv_i[0](l)\| \\
&=& \|\sum_{i=1}^{n-\psi} \left(\bfM^*_{ji}- \bfM^*_{ki}\right)\bfv_i[0](l)\| \\
&\leq& \sum_{i=1}^{n-\psi} \|\bfM^*_{ji}- \bfM^*_{ki}\|\,\|\bfv_i[0](l)\| \\
&\leq& \sum_{i=1}^{n-\psi} \delta(\bfM^*)\|\bfv_i[0](l)\| \\
&\leq& (n-\psi) \delta(\bfM^*) \max(\|U\|,\|\mu\|) \\
&\leq & (n-\psi)\max(\|U\|,\|\mu\|) \, \delta(\Pi_{i=1}^t \bfM[i]) \nonumber \\
& \leq & n \max(\|U\|,\|\mu\|)  \, \delta(\Pi_{i=1}^t \bfM[i])  \mbox{~~~~ because $0\leq \psi<n$} \label{e_d}
\end{eqnarray}
Therefore, by (\ref{e_delta_epsilon}) and (\ref{e_d}),
\begin{eqnarray} \|\bfv_i[\terminate](l)-\bfv_j[\terminate](l)\| &<& \epsilon, ~~~~~~1\leq l\leq d.\label{e_e}
\end{eqnarray}
The output of a fault-free process equals its state at termination (after
$\terminate$ iterations).
Thus, (\ref{e_e}) implies that Algorithm \algo~satisfies the $\epsilon$-agreement condition. \\

\comment{+++++++++++++++++++++++++++
\section{Correctness of Algorithm \algo~Under Condition SC2}
\label{sec:sufficiency}

in this section, we
prove that Algorithm \algo satisfies {\em validity}, {\em $\epsilon$-agreement}
and {\em termination} conditions, respectively, provided that $G(\scriptv,\scripte)$
satisfies condition SC2.

\section{New Necessary condition}

Consider partition $L,R,C,F$. Consider set $B$, defined as the processes
in $L,R$ that have links
from $R,L$, respectively, and all reachable processes in $L+C$ and $R+C$, respectively.
It should not be be possible to color processes in $B$
with at most $d+1$ colors such that
each process in $B$ has at most $f$ neighbors in (each color+set C),
except its own color.

This is equivalent to PODC'12 for the case when $d=1$.

Necessity for general $d$: trivial

Sufficiency?
++++++++++++++++++++++++++}

\section{Summary}
\label{s_summary}

This paper addresses {\em Byzantine vector consensus} (BVC), wherein the input at each process is a $d$-dimensional vector of reals, and each process is expected to decide on a {\em decision vector} that is in the {\em convex hull} of the input vectors at the fault-free processes  \cite{mendes13stoc,vaidya13podc}.
We address a particular class of
{\em iterative} algorithms in {\em incomplete} graphs, and prove a necessary condition (NC), and a sufficient condition (SC),
for the graphs to be able to solve the vector consensus problem iteratively.
This paper extends our past work on {\em scalar} consensus (i.e., $d=1$) in incomplete graphs in presence of Byzantine faults \cite{vaidya12podc,vaidya_matrix_IABC},
which yielded an
exact characterization of graphs in which the problem is solvable for $d=1$. However, the necessary
condition NC presented in the paper for {\em vector} consensus does not match with the sufficient
condition SC.
We hope that this paper will motivate further work on identifying
the tight sufficient condition.


\appendix

\section{Proof of Lemma \ref{l_degree_sc}}
\label{a_l_degree_sc}

\paragraph{Lemma \ref{l_degree_sc}~}
{\em
For $f>0$, if graph $G(\sv,\se)$ satisfies Condition SC, then
in-degree of each process in $\sv$ must be at least $(d+1)f+1$.
That is, for each $i\in\sv$, $|N_i^-|\geq (d+1)f+1$.
\\~\\ }
\begin{proof}
The proof is by contradiction.
As per the assumption in the lemma, $f>0$, and graph $G(\sv,\se)$ satisfies condition SC. 

Suppose that some process $i$ 
has in-degree at most $(d+1)f$.
Define $L=\{i\}$, and $C=\emptyset$.
Partition the processes in $\sv-\{i\}$ into sets
$R$ and $F$ such that $|F|\leq f$, $|F\cap N_i^-|\leq f$ and $|R\cap N_i^-|\leq df$.
Such sets $R$ and $F$ exist because in-degree of process $i$ is at most $(d+1)f$.
$L,R,C,F$ thus defined form a partition of $\sv$.

Now, $f>0$ and $d\geq 1$, and $|L\cup C|=1$.
Thus, there can be at most $1$ link from $L\cup C$ to any process
in $R$, and $1\leq df$.
Therefore, $L\cup C\not\dfArrow R$.
Also, because $C=\emptyset$, $|(R\cup C)\cap N_i^-|=|R\cap N_i^-|\leq df$.
Thus, there can be at most $df$ links from $R\cup C$ to process $i$, which is the
only process in $L=\{i\}$.
Therefore, $R\cup C\not\dfArrow L$.
Thus, the above partition of $\sv$ does not satisfy
Condition SC. This is a contradiction.
\end{proof}

\section{Proof of Lemma \ref{l_reduced}}
\label{a_l_reduced}

Before presenting the proof of Lemma \ref{l_reduced}, we introduce some terminology.

\begin{definition}
\label{def:decompose}
{\bf Graph decomposition:}
Let $H$ be a directed graph. Partition graph $H$ into strongly connected components,
$H_1,H_2,\cdots,H_h$, where $h$ is a non-zero integer dependent on graph $H$,
such that
\begin{itemize}
\item every pair of processes {\bf within} the same strongly connected component has directed
paths in $H$ to each other, and
\item for each pair of processes, say $i$ and $j$, that belong to
two {\bf different} strongly connected components, either $i$ does not have a
directed path to $j$ in $H$, or $j$ does not have a directed path to $i$ in $H$.
\end{itemize}
Construct a graph $H^*$ wherein each strongly connected component $H_k$ above is represented
by vertex $c_k$, and there is an edge from vertex $c_k$ to vertex $c_l$ only if
the processes in $H_k$ have directed paths in $H$ to the processes in $H_l$.
\end{definition}
It is known that the decomposition
graph $H^*$ is a directed {\em acyclic} graph \cite{dag_decomposition}.

\begin{definition}
{\bf Source component}:
Let $H$ be a directed graph, and let $H^*$ be its decomposition as per
Definition~\ref{def:decompose}. 
Strongly connected component $H_k$ of $H$ is said to be a {\em source component}
if the corresponding vertex $c_k$ in $H^*$ is \underline{not} reachable from any
other vertex in $H^*$. 
\end{definition}

\paragraph{Lemma \ref{l_reduced}} {\em
Suppose that graph $G(\sv,\se)$ satisfies Condition SC, and $\sF\subset \sv$.
Then, in any reduced graph $H(\sv_\sF,\se_\sF)$, there exists
a process that has a directed path to all the remaining processes in $\sv_\sF$.
\\~\\
}
%
%
\begin{proof}
Suppose that graph $G(\sv,\se)$ satisfies Condition SC.
We first prove that the reduced graph $H(\sv_\sF,\se_\sF)$ contains exactly one
{\em source component}.

Since $|\sF|<|\scriptv|$, reduced graph 
$H(\sv_\sF,\se_\sF)$ contains at least one process;
therefore, at least one
source component must exist in the reduced graph $H$.
(If $H$ consists of a single strongly connected component,
then that component is trivially a source component.)

So it remains to prove that $H(\sv_\sF,\se_\sF)$ cannot
contain more than one source component.
The proof is by contradiction.

Suppose that the decomposition of
$H(\scriptv_\sF,\scripte_\sF)$ 
contains at least two source components.
Let the sets of processes in two such source components of the reduced graph $H$
be denoted as $L$ and $R$, respectively. Let $C=\scriptv_\sF-L-R=\sv-\sF-L-R$.
Observe that $\sF,L,C,R$ form a partition of the processes in $\scriptv$.
Since $L$ is a source component in the reduced graph $H(\sv_\sF,\se_\sF)$, 
there are no directed links in $\scripte_\sF$ from any process in
$C\cup R$ to the processes in $L$.
Similarly, since $R$ is a source component in the reduced graph $H$,
there are no directed links in $\scripte_\sF$ from any process in $L\cup C$ to
the processes in $R$.
These observations, together with the manner in which $\scripte_\sF$
is defined, imply that (i) there are at most $df$ links in $\scripte$ from
the processes in $C\cup R$ to each process in $L$, and
(ii) there are at most $df$ links in $\scripte$ from
the processes in $L\cup C$ to each process in $R$.
Therefore, in graph $G(\scriptv,\scripte)$, $C\cup R\not\dfArrow L$
and $L\cup C\not\dfArrow R$. This violates Condition SC, resulting in a contradiction.
Thus, we have proved that $H(\sv_\sF,\se_\sF)$ must contain exactly one source component. 

Consider any process in the unique source component, say process $s$. By definition
of a strongly connected component,
process $s$ has directed paths to all the processes in the source component
using the edges in $\se_\sF$. 
Also, by the uniqueness of the source component, all other strongly connected components in $H$
(if any exist) are {\bf not} source components, and hence reachable from the source component the edges in $\se_\sF$. 
Therefore, process $s$ also has paths to all the processes in $\sv_\sF$ that
are outside the source component as well.
Therefore, process $s$ has paths to all the process in $\sv_\sF$.
This proves the lemma.
\end{proof}

~

The above proof shows that, if Condition SC is true, then
each reduced graph contains exactly one source component.
It is also possible to show that, if 
each reduced graph $H$ contains exactly one source component,
then Condition SC is satisfied.

\comment{+++++++++++++++++++++++++++++++++++++++++++++++++++++++++++++++++++++++++++++++++++++
++++++++++++++++  NOTES +++++++++++++++++++++++++

Necessary condition:
L, R, C, F
Color such that each boundary processes has at most f incoming neighbors of any color on the "other" side.
Then there must be a boundary process with neighbors of d+1 colors (remove this? on the other side).
Necessity is obvious (d colors don't suffice) -- choose points such that each set of d+1 points is affinely independent, and then assign one point to each color.

++++++

Necessary condition:
L,R,C,F
Color processes in L with d1>0 colors, and processes in R with d2>0 colors, such that
d1+d2 <= d+1. Then there must be a process with f+1 neighbors of same color on the
other side.
++++++++++++++++++++++++++++++++++++++++++++++++++++++}

\section{Proof of Lemma \ref{l_matrix}}
\label{a_l_matrix}

Recall that $\sF$ is actual set of faults in a given execution of the proposed
algorithm, and $|\sF|=\psi$. As noted before, without loss of generality,
we assume that processes 1 through $n-\psi$ are fault-free, and rest are faulty.
To simplify the terminology, the definition below assumes a certain iteration index $t\geq 1$.
\begin{definition}
\label{d_valid}
{\bf $\chi$-dependence:}
For a constant $\chi$, $0\leq \chi\leq 1$,
a point $\bfr$ in the convex hull of  $\{ \bfv_i[t-1]\,|\, 1\leq i\leq n-\psi\}$
is said to be \underline{$\chi$-dependent on process $k$} if there \underline{exist}
constants $\alpha_i$, $1\leq i\leq n-\psi$, such that $0\leq \alpha_i\leq 1$,
$\sum_{1\leq k\leq n-\psi}~\alpha_i = 1$, and
$$\alpha_k\geq \chi$$
such that
\[
\bfr = \sum_{1\leq i\leq n-\psi}~ \alpha_i\,\bfv_i[t-1]
\]
$\alpha_i$ is said to be the {\bf weight} of $\bfv_i[t-1]$ in the above convex combination.
\end{definition}

\begin{lemma}
\label{l_chi}
Let $P\subseteq\sv-\sF$ be a non-empty subset of fault-free processes.
Any point $\bfr$ in the convex hull of  $\{ \bfv_j[t-1]\,|\, j\in P\}$
is $\frac{1}{n}$-dependent on at least one fault-free process in $P$.
\end{lemma}
\begin{proof}
Recall that we assume processes 1 through $n-\psi$ to be fault-free,
and the remaining processes to be faulty.
Any point $\bfr$ in the convex
hull of the state of fault-free processes in $P$ can be written as their convex combination.
Since there are at most $n$ fault-free processes in $P$, and their weights in the convex
combination add to 1, at least one of the weights must be $\geq\frac{1}{n}$, proving
the lemma.
\end{proof}

\begin{definition}
Points in multiset $R$ are said to be collectively $\chi$-dependent on processes in
set $P$, if for each $p\in P$, there exists $\bfr\in R$ such that $\bfr$ is $\chi$-dependent
on $p$.
\end{definition}

\paragraph{Lemma \ref{l_matrix}} {\em
Suppose that graph $G(\sv,\se)$ satisfies Condition SC.
Then the state updates performed by the fault-free processes
in the $t$-th iteration ($t\geq 1$) of Algorithm \algo~can be
expressed as
\begin{eqnarray}
 \bfv[t] & = & \bfM[t]\,\bfv[t-1]
\label{e_matrix_appendix}
\end{eqnarray}
where $\matrixm[t]$ is a $(n-\psi) \times (n-\psi)$ row stochastic matrix
with the following property:
there exists a reduced graph $H[t]$,
and a constant $\beta$ ($0<\beta\leq 1$) that depends only on graph $G(\sv,\se)$, such that
 $$\matrixm_{ij}[t] ~ \geq ~ \beta$$
if $j=i$ or edge $(j,i)$ is in $H[t]$.
}
~\\

\noindent
\begin{proof}
We consider the case of $f=0$ separately from $f>0$.
\begin{itemize}
\item {\bf $f=0$:}
When $f=0$, all the processes are fault-free
(i.e., $\sF=\emptyset$), and $(d+1)f+1=1$. In this case, there is only one reduced
graph, which is identical to $G(\sv,\se)$. Because $(d+1)f+1=1$,
each multiset $C$ used in the {\em Update step} of
Algorithm \algo~to compute multiset $Z_i[t]$ contains value received from exactly one incoming
neighbor. (When $f=0$, and Condition SC holds true, it is possible
that exactly one process in the graph has no incoming neighbors.
If some process $j$ has no incoming neighbors, then $Z_j[t]=\emptyset$.)

For $C=\{\bfx\}$, that is, $C$ containing a single point $\bfx$, the Tverberg point for $f=0$ is $\bfx$ as well.
Thus, $|Z_i[t]|=|N_i^-|$, and $\bfv_i[t]$ is simply the average of $\bfv_i[t-1]$ and the values received from
all the incoming neighbors of $i$, which are necessarily fault-free (because $f=0$).
 Thus, $\bfv_i[t]$ is a convex combination of the elements of
$\bfv[t-1]$, wherein the weight assigned to each $j$ such that $j=i$ or $(j,i)\in\se$ is
$\frac{1}{1+|N_i^-|}$. Since $1+|N_i^-|\leq n$, by defining $\beta=\frac{1}{n}$, the statement of the lemma follows.

\item {\bf $f>0$:}
Consider a fault-free process $i$.
Suppose that the number of faulty incoming neighbors of process $i$ is $f_i\leq f$.
When Condition SC holds, and $f>0$, as shown in Lemma \ref{l_degree_sc}, 
each process has an in-degree of at least $(d+1)f+1$.
Therefore, for some integer $\kappa\geq 1$,
let $$|r_i[t]|=|N_i^-|=(d+1)f+\kappa=df+(f-f_i+\kappa)+f_i.$$
Recall that
the {\em Update step} of Algorithm \algo~enumerates suitable subsets $C$ of
multiset $r_i[t]$, and picks one {\em Tverberg point} corresponding to each such $C$.
By an inductive argument we will identify $\kappa$ such subsets $C_1,C_2,\cdots,C_\kappa$,
such that the Tverberg points added to $Z_i[t]$ corresponding to those
$\kappa$ subsets are collectively dependent on at least $(f+1)-f_i=f-f_i+\kappa$
fault-free incoming neighbors of process $i$.
Let the Tverberg point added corresponding to $C_j$ be denoted as $\bfz_j$.
\begin{itemize}
\item
Consider a subset $C_1$ of $r_i[t]$ such that $|C_1|=(d+1)f+1$.
A Tverberg point $\bfz_1$ for $C_1$ is added to $Z_i[t]$ in the {\em Update step}.
By the definition of a Tverberg point, there exists a partition $V_1,V_2,\cdots,V_{f+1}$
of multiset $C_1$, wherein each $V_j$ is non-empty, such that
\[
\bfz_1\in\cap_{1\leq j\leq f+1}\HH(V_j)
\]
Since process $i$ has at most $f_i$ faulty incoming neighbors,
at most $f_i$ values in $C_1$ are received from faulty neighbors.
Thus, at least $(f+1)-f_i=f-f_i+1$ of the subsets in the above partition contain values received from
only fault-free neighbors of process $i$.
For each such fault-free $V_k$, $\bfz_1\in\HH(V_k)$, and by Lemma \ref{l_chi}, $\bfz_1$ must 
be $\frac{1}{n}$-dependent on at least one fault-free neighbor of $i$ whose value
is included in $V_k$. Since the $V_j$'s form a partition, this
implies that there are $f-f_i+1$ distinct fault-free incoming neighbors
of $i$ on which $\bfz_1$ is $\frac{1}{n}$-dependent.
Let $\{p_1,p_2,\cdots,p_{f-f_i+1}\}$ denote $f-f_i+1$ distinct
incoming fault-free neighbors of $i$ on which $\bfz_1$ is $\frac{1}{n}$-dependent.
Note that $\{p_1,p_2,\cdots,p_{f-f_i+1}\}$ is a subset of the processes whose
values are included in $C_1$.

If $\kappa=1$, then we have already identified the subsets $C_1,\cdots,C_\kappa$
as desired. If $\kappa>1$, then we inductively identify the remaining $C_i$'s below.

\item
Suppose that $\kappa>1$, and that we have identified subsets $C_1,\cdots,C_\nu$, where $1\leq \nu<\kappa$
such that 
$\{\bfz_1,\bfz_2,\cdots,\bfz_\nu\}$
are collectively $\frac{1}{n}$-dependent on $f-f_i+\nu$ distinct incoming fault-free neighbors
of process $i$ that form
the set $\{p_1,p_2,\cdots,p_{f-f_i+\nu}\}$.
(The previous item proved the correctness of this assumption for $\nu=1$.)

Pick a subset $$C_{\nu+1}\subseteq r_i[t]-\cup_{j=1}^{\nu}\,\{\bfv_{p_j}[t-1]\},$$ such that $|C_{\nu+1}|=(d+1)f+1$.
In other words, $C_{\nu+1}$ does not contain values received from 
the $\nu$ neighbors in $\{p_1,p_2,\cdots,p_\nu\}$ (these neighbors are fault-free
by definition, and hence correctly send their state). 
Such a set $C_{\nu+1}$ must exist because $1\leq \nu<\kappa$,
and $|N_i^-|=(d+1)f+\kappa\geq (d+1)f+1+\nu$.

Note that $r_i[t]$ is a multiset, and $r_i[t]-\cup_{j=1}^{\nu}\,\{\bfv_{p_j}[t-1]\}$ is a multiset
as well. As an example,
if a value appears in $r_i[t]$ three time, and appears only once in $\cup_{j=1}^{\nu}\,\{\bfv_{p_j}[t-1]\}$,
then that value will appear twice in $r_i[t]-\cup_{j=1}^{\nu}\,\{\bfv_{p_j}[t-1]\}$.

By an argument similar to the previous item, we can show that the Tverberg point $\bfz_{\nu+1}$
corresponding to $C_{\nu+1}$ must be $\frac{1}{n}$-dependent on at least $f-f_i+1$
faulty-free processes from whom the values in $C_{\nu+1}$ are received.
By definition of $C_{\nu+1}$, processes $p_1,\cdots,p_\nu$ are not among these
$f-f_i+1$ processes. Thus, among these $f-f_i+1$ fault-free processes,
there exists at least one fault-free incoming neighbor of $i$ that is not included
in $\{p_1,p_2,\cdots,p_{f-f_i+\nu}\}$. Let us denote one such neighbor as $p_{f-f_i+\nu+1}$.
Thus, we have identified set $\{p_1,p_2,\cdots,p_{f-f_i+\nu+1}\}$ consisting
of $f-f_i+\nu+1$ fault-free incoming neighbors of process $i$ such that the
points in $\{\bfz_1,\bfz_2,\cdots,\bfz_{\nu+1}\}$ are collectively $\frac{1}{n}$-dependent on
$\{p_1,p_2,\cdots,p_{f-f_i+\nu+1}\}$.
\end{itemize}
Note that $\{\bfz_1,\bfz_2,\cdots,\bfz_\kappa\}\subseteq Z_i[t]$.
The above argument inductively proves that there exist $f-f_i+\kappa$ incoming fault-free neighbors
of process $i$, forming set $\{p_1,p_2,\cdots,p_{f-f_i+\kappa}\}$ such that the points
in $Z_i[t]$ are collectively $\frac{1}{n}$-dependent on them.
Now observe the following:
\begin{enumerate}
\item $\bfz_1$ is $\frac{1}{n}$-dependent on each fault-free process in $\{p_1,p_2,\cdots,p_{f-f_i+1}\}$.
Then, for each $j$, $1\leq j\leq f-f_i+1$,
there exists a convex combination representation of $\bfz_1$
in terms of elements of $\bfv[t-1]$, in which the weight of
process $p_j$ is at least $\frac{1}{n}$.
By ``averaging'' over these $f-f_i+1$ convex combination representations of $\bfz_1$,
we can obtain another convex combination representation of $\bfz_1$ in terms of
the elements of $\bfv[t-1]$ in which
weight of {\em each} process in $\{p_1,p_2,\cdots,p_{f-f_i+1}\}$ is at least $\frac{1}{n(f-f_i+1)}\geq \frac{1}{n^2}$.
\item
When $\kappa\geq 2$,
for $2\leq \nu\leq \kappa$, $\bfz_\nu$ is $\frac{1}{n}$-dependent on fault-free process $p_{f-f_i+\nu}$.
Thus, 
there exists a convex combination representation of $\bfz_\nu$
in terms of elements of $\bfv[t-1]$, in which the weight of
process $p_{f-f_i+\nu}$ is at least $\frac{1}{n}\geq \frac{1}{n^2}$.
\end{enumerate}
Recall that $\bfv_i[t]$ is computed as average of the points in $Z_i[t]$, where
$\{\bfz_1,\bfz_2,\cdots,\bfz_\kappa\}\subseteq Z_i[t]$, and $|Z_i[t]|\leq \nchoosek{n}{(d+1)f+1}$.
Thus, the two observations above imply that there exists a
there exists a convex combination representation of $\bfv_i[t]$
in terms of elements of $\bfv[t-1]$, in which the weight of
each process in $\{p_1,p_2,\cdots,p_{f-f_i+\kappa}\}$ is at least
$\frac{1}{n^2(1+|Z_i[t])}
\geq
\frac{1}{n^2\left(1+\nchoosek{n}{(d+1)f+1}\right)}$.

\begin{eqnarray}
\beta &=&\frac{1}{n^2\left(1+\nchoosek{n}{(d+1)f+1}\right)}
\label{e_beta}
\end{eqnarray}
and define set
$$P_i[t] = \{p_1,p_2,\cdots,p_{f-f_i+\kappa}\}.$$
Note that $|N_i^-\cap (\sv-\sF)|=(d+1)f+\kappa-f_i$. Thus,
\begin{eqnarray}
|P_i[t]| ~=~ f-f_i+\kappa=|N_i^-\cap (\sv-\sF)|-df \label{e_P}
\end{eqnarray}

Recall that we chose $i$ to be any fault-free process in $\sv-\sF$.
Thus, for each fault-free process $i$, such a set $P_i[t]$ exists, where $|P_i[t]|=|N_i^-\cap (\sv-\sF)|-df$.
Therefore, for each fault-free process $i$,
there exists
a convex combination representation of $\bfv_i[t]$
in terms of elements of $\bfv[t-1]$, in which the weight of
each process in
$\{i\}\cup P_i[t]$ is at least $\beta$.
In particular, there exist weights $\alpha_j$'s such that
$\sum_{j\in \{i\}\cup N_i^-}\alpha_j=1$,
 $0\leq \alpha_j\leq 1$ for all 
 $j\in\{i\}\cup N_i^-$, and
\[
\bfv_i[t] ~=~ \sum_{j\in \{i\}\cup N_i^-}\, \alpha_j\,\bfv_j[t-1] 
\mbox{~~and}
\]
\begin{eqnarray}
\label{e_alpha}
\alpha_j~\geq~\beta \mbox{~for~} j\in \{i\}\cup P_i[t].
\end{eqnarray}

Let us now define $i$-th row of matrix $\bfM[t]$ as follows:
\begin{itemize}
\item $\bfM_{ij}[t] = \alpha_j$, for $j \in \{i\}\cup N_i^-$, and
\item $\bfM_{ij}[t] = 0$, otherwise.
\end{itemize}
Due to (\ref{e_P}), the subgraph consisting of only the fault-free processes in $\sv-\sF$,
such that each process $i\in \sv-\sF$
only has incoming links from the processes in $P_i[t]$ is a reduced graph.
Then, defining this subgraph as $H[t]$, the lemma follows from (\ref{e_alpha}).
\end{itemize}
\end{proof}

\section{Proof of Lemma \ref{l_cm}}
\label{a_l_cm}

\paragraph{Lemma \ref{l_cm}:}
{\em
For any $t \geq 1$, there exists a reduced graph $\graphh[t] \in R_{\sF}$ such
 that $\beta {\normalfont\bf\matrixh[t] \leq \matrixm}[t]$,
where $\matrixh[t]$ is the connectivity matrix for $H[t]$.
}
\\

\noindent
\begin{proof}
By Lemma \ref{l_matrix}, there exists
a reduced graph $H[t]$
such that $\matrixm_{ij}[t] ~ \geq ~ \beta$,
if $j=i$ or edge $(j,i)$ is in the reduced graph $H[t]$.

Let $\bfH[t]$ denote the connectivity matrix for reduced graph $H[t]$.
Then
$\bfH_{ij}[t]$ denotes the element in $i$-row and $j$-th column of
$\bfH[t]$.
By definition of the connectivity matrix, we know that $H_{ij}[t]=1$
if $j=i$ or edge $(j,i)$ is in the reduced graph; otherwise,
$H_{ij}[t]=0$. 

The claim in Lemma \ref{l_cm} then follows from the above two observations.
\end{proof}

\end{document}